\def\comment#1{}
\def\bbR{\mathbb R}
\def\P{\mathbb P}
\def\what{\widehat}
\def\wtilde{\widetilde}
\def\eqref#1{(\ref{#1})}
\def\lb{\langle}
\def\rb{\rangle}
\def\Q{{\mathbb Q}}
\newtheorem{Thm}{Theorem}
\newtheorem{Lem}{Lemma}
\newtheorem{Prop}{Proposition}
\newtheorem{Rem}{Remark}
\newtheorem{Example}{Example}
\newcommand{\ind}{{\bf 1}}
\newcommand{\proba}{\mathbb P}
\newcommand{\esp}{{\mathbb E}}
\newcommand{\defe}{\mathrel{\mathop:}=}
\newcommand{\inv}{^{-1}}
\newcommand{\argmin}{{\rm{argmin}}}
\newcommand{\argmax}{{\rm{argmax}}}
\newcommand{\calB}{{\cal B}}
\newcommand{\filF}{{\cal F}}
\newcommand{\calH}{{\cal H}}
\newcommand{\calJ}{{\cal J}}
\newcommand{\calN}{{\cal N}}
\newcommand{\calP}{{\cal P}}
\newcommand{\calR}{{\cal R}}
\def\indt#1{\{#1_t\}_{t\in T}}
\def\indz#1{\{#1_k\}_{k\in \mathbb Z}}
\newcommand{\eqnh}{\begin{eqnarray*}}
\newcommand{\eqne}{\end{eqnarray*}}
\newcommand{\eqnhn}{\begin{eqnarray}}
\newcommand{\eqnen}{\end{eqnarray}}
\newcommand{\equh}{\begin{equation}}
\newcommand{\eque}{\end{equation}}
\def\sif#1#2{\sum_{#1=#2}^\infty}
\def\bveein{\bigvee_{i=1}^n}
\def\bveejp{\bigvee_{j=1}^p}
\newcommand{\eqd}{\stackrel{\rm d}{=}}
\newcommand{\widebar}{\overline}
\newcommand{\eintt}{\ \int^{\!\!\!\!\!\!\!e}}
\newcommand{\Eintt}{\int^{\!\!\!\!\!\!\!e}}
\def\Eint#1{\, \int^{\!\!\!\!\!\!\!\!{e}}_{#1}}
\def\topp#1{^{(#1)}}
\def\bccbb#1{\Big\{#1\Big\}}
\def\sccbb#1{\{#1\}}
\def\bpp#1{\Big(#1\Big)}
\def\spp#1{(#1)}
\def\bb#1{\left[#1\right]}
\def\vv#1{{\bf #1}}
\def\d{{\rm d}}
\def\bbR{{\mathbb R}}
\def\P{{\mathbb P}}
\def\mand{\mbox{ and }}
\def\qmand{\quad\mbox{ and }\quad}
\def\mwith{\mbox{ with }}
\def\qmwith{\quad\mbox{ with }\quad}
\def\mfa{\mbox{ for all }}
\def\adaptF#1{\{#1_t,\filF_t:0\leq t<\infty\}}
\def\wt#1{\widetilde{#1}}
\def\wb#1{\widebar{#1}}
\def\wht#1{\widehat{#1}}
\def\ifhead#1#2{\left\{\begin{array}{#1@{\quad\mbox{ if }\quad}#2}}
\def\ifend{\end{array}\right.}
\begin{document}\sloppy
%\comment{
\title{Conditional sampling for spectrally discrete max-stable random fields\thanks{%Technical Report 509, Department of Statistics, the University of Michigan, Ann Arbor, U.S.A. 
This work was partially supported by the NSF grant DMS--0806094 at the University of  Michigan, Ann Arbor.}}

\author{Yizao Wang and Stilian A. Stoev
\thanks{{\it Address:}
 Department of Statistics, The University of Michigan, 439 W.\ Hall, 1085 S.\ University, Ann Arbor,
  MI 48109--1107. U.S.A.
  {E--mails:} 
  \texttt{\{yizwang, sstoev\}@umich.edu}.
  }} 
\maketitle
%}

\comment{
\authornames{Y.\ Wang and S.\ Stoev} % insert the authors here for use in running head
\shorttitle{Conditional sampling for max-stable random fields} % insert short title here for use in running head
\title{Conditional sampling for spectrally discrete\\ max-stable random fields}
\authorone[University of Michigan, Ann Arbor, U.S.A.]{Yizao Wang}% Affiliation is just the name of your university or institution
\authorone[University of Michigan, Ann Arbor, U.S.A.]{Stilian A.\ Stoev}
\addressone{Department of Statistics, the University of Michigan, 439 West Hall, 1085 South University, Ann Arbor,
  MI 48109--1107. {\it E--mails:} {\tt \{yizwang, sstoev\}@umich.edu} } % Your postal address goes here.
}
\begin{abstract}
Max-stable random fields play a central role in modeling extreme value phenomena.  
We obtain an explicit formula for the conditional probability in general max-linear models,
which include a large class of max-stable random fields. As a consequence, we develop an algorithm 
for efficient and exact sampling from the conditional distributions. Our method provides a computational 
solution to the prediction problem for spectrally discrete max-stable random fields. 
This work offers new tools and a new perspective to many statistical inference problems for spatial
extremes, arising, for example, in meteorology, geology, and environmental applications.
\end{abstract}
%\keywords{Max-stable process; spectrally discrete representation; regular conditional probability; MARMA process, Smith process; set cover problem}
%\ams{60G70}{65C50}

%\tableofcontents
\section{Introduction}

 \subsection{Motivation}
Max-stable stochastic processes and random fields are fundamental statistical models for 
the dependence of extremes. This is because they arise in the limit of rescaled maxima. 
Indeed, consider the component-wise maxima 
$$%\begin{equation}\label{e:max-stab}
 M_t^{(n)} = \max_{j = 1,\dots,n} \xi_t^{(j)},\ \ t\in T
$$%\end{equation}
of independent realizations $\{\xi_t^{(j)}\}_{t\in T}$, $j=1,\dots,n$ of a random field $\xi=\{\xi_t\}_{t\in T}$. 
If the random field $\{M_t^{(n)}\}_{t\in T}$ converges in law, as $n\to\infty$, under judicious normalization, then its limit $X = \indt X$ is necessarily {\em max-stable} (see e.g.~Resnick~\cite{resnick87extreme} and de Haan and Ferreira~\cite{dehaan06extreme}).

Therefore, the max-stable processes (random fields, resp.) are as important to extreme value theory as are the Gaussian processes 
to the classical statistical theory based on the central limit theorem. The multivariate max-stable laws and processes have been studied extensively 
in the past 30 years. See e.g.~Balkema and Resnick~\cite{balkema77max}, de Haan~\cite{dehaan78characterization,dehaan84spectral}, de Haan and Pickands~\cite{dehaan86stationary}, 
Gin\'e et al.~\cite{gine90max}, Smith~\cite{smith90max}, Resnick and Roy~\cite{resnick91random}, Davis and Resnick~\cite{davis89basic,davis93prediction}, Stoev and Taqqu~\cite{stoev06extremal}, Kabluchko et al.~\cite{kabluchko09stationary}, 
Wang and Stoev~\cite{wang10structure}, among many others.

The modeling and parameter estimation of the univariate marginal distributions of the extremes have been studied extensively (see 
e.g. Davison and Smith~\cite{davison90models}, de Haan and Ferreira~\cite{dehaan06extreme}, Resnick~\cite{resnick07heavy} 
and the references therein). Many of the recent developments in this domain focus on 
the characterization, modeling and estimation of the dependence for multivariate extremes. In this context, building adequate max-stable processes and random fields plays a key role. See e.g.~de Haan and Pereira~\cite{dehaan06spatial}, Buishand et al.~\cite{buishand08spatial},
Schlather~\cite{schlather02models}, Schlather and Tawn~\cite{schlather03dependence}, Cooley et al.~\cite{cooley07bayesian}, and Naveau et al.~\cite{naveau09modelling}.

Our present work is motivated by an important and long-standing challenge, namely,
the prediction for max-stable random processes and fields. Suppose that
one already has a suitable max-stable model for the dependence structure of a
random field $\indt X$.  The field is observed at several
locations $t_1,\dots, t_n \in T$ and one wants to {\em predict} the
values of the field $X_{s_1},\dots,X_{s_m}$ at some other locations.
The optimal predictors involve the {\em conditional distribution} of
$\indt X$, given the data.  Even if the finite-dimensional distributions of
the field $\indt X$ are available in analytic form, it is typically
impossible to obtain a closed-form solution for the conditional
distribution.  Na\"ive Monte Carlo approximations are not practical either,
since they involve conditioning on events of infinitesimal
probability, which leads to mounting errors and computational costs.

Prior studies of Davis and Resnick~\cite{davis89basic,davis93prediction} and Cooley et al.~\cite{cooley07bayesian}, among others, have shown that the prediction problem
in the max-stable context is challenging, and it does not have an elegant analytical solution. On the other hand, the growing popularity and the use of 
max-stable processes in various applications, make this an important problem. This motivated
us to seek a {\em computational} solution.

In this work, we develop theory and methodology for sampling from the conditional distributions of spectrally discrete
max-stable models.  More precisely, we provide an algorithm that can generate efficiently {\em exact} independent samples from the 
{\em regular conditional probability} of  $(X_{s_1},\dots,X_{s_m})$, given the values $(X_{t_1},\dots,X_{t_n})$. For the sake of simplicity, we write $\vv X = (X_1,\dots,X_n) \equiv (X_{t_1},\dots,X_{t_n})$.
The algorithm applies to the general {\it max-linear model}:
\equh\label{rep:maxLinear_old}
 X_{i} = \max_{j = 1,\dots,p}a_{i,j}Z_j\equiv \bveejp a_{i,j}Z_j\,, i = 1,\dots,n.
\eque
where the $a_{i,j}$'s are known non-negative constants and the $Z_j$'s are independent continuous 
non-negative random variables. Any multivariate max-stable distribution can be approximated arbitrarily well
via a max-linear model with sufficiently large $p$.
 
The main idea is to first generate samples from the {\em regular conditional probability} distribution of
$\vv Z \mid \vv X = \vv x$, where $\vv Z = (Z_j)_{j = 1,\dots, p}$. Then, the conditional distributions of
$$
 X_{s_k} = \bveejp b_{k,j}Z_j\,, 1\leq k\leq m,
$$
given $\vv X = \vv x$ can be readily obtained, for any given $b_{k,j}$'s. In this paper, we assume that the model is completely known, i.e., the parameters $\{a_{i,j}\}$ and $\{b_{k,j}\}$ are given. The statistical inference for these parameters is beyond the scope of this paper.

Observe that if $\vv X = \vv x$, then~\eqref{rep:maxLinear_old} implies natural equality and inequality constraints on the $Z_j$'s. More precisely,~\eqref{rep:maxLinear_old} gives rise to a set of so-called {\em hitting scenarios}.
In each hitting scenario, a subset of the $Z_j$'s equal, in other words {\em hit}, their upper bounds and the rest of the $Z_j$'s can take arbitrary values in certain open intervals. We will show that the regular conditional probability of $\vv Z\mid\vv X = \vv x$ is a weighted mixture of the various distributions of the vector $\vv Z$, under all possible hitting scenarios corresponding to $\vv X = \vv x$.

The resulting formula, however, involves determining {\em all} hitting scenarios, which becomes computationally prohibitive for large and even moderate values of $p$. This issue is closely related to the NP-hard {\em set-covering problem} in computer science (see e.g.~\cite{caprara00algorithms}).

Fortunately, further detailed analysis of the probabilistic structure of the max-linear models allows us to obtain a different formula of the regular conditional probability (Theorem~\ref{thm:factorization}). It yields an exact and computationally efficient algorithm, which in practice can handle complex max-linear models with $p$ in the order of thousands, on a conventional desktop computer. The algorithm is implemented in the R (\cite{R09R}) package {\tt maxLinear}~\cite{wang10maxLinear}, with the core part written in C/C++. We also used the R package {\tt fields} (\cite{furrer09fields}) to generate some of the figures in this paper.

We illustrate the performance of our algorithm over two classes of processes: the max-autoregressive moving average (MARMA) time series (Davis and Resnick~\cite{davis89basic}), and the Smith model (Smith~\cite{smith90max}) for spatial extremes. The MARMA processes are spectrally discrete max-stable processes, and our algorithm applies directly. In Section~\ref{sec:MARMA}, we demonstrate the prediction of MARMA processes by conditional sampling and compare our result to the projection predictors proposed in~\cite{davis89basic}. To apply our algorithm to the Smith model, on the other hand, we first need to discretize the (spectrally continuous) model. Section~\ref{sec:Smith} is devoted to conditional sampling for the discretized Smith model. Thanks to the computational efficiency of our algorithm, we can choose a mesh fine enough to obtain a satisfactory discretization. Figure~\ref{fig:samplings} shows four realizations from such a discretized Smith model, conditioning only on 7 observations (with assumed value 5). 
The algorithm applies in the same way to more complex models.  
\begin{figure}[ht!]
  \begin{center}
  \ifthenelse{\boolean{showFigure}}{
     \includegraphics[width = .8\textwidth]{CS_4samples.pdf}
   }{\includegraphics[width = .8\textwidth]{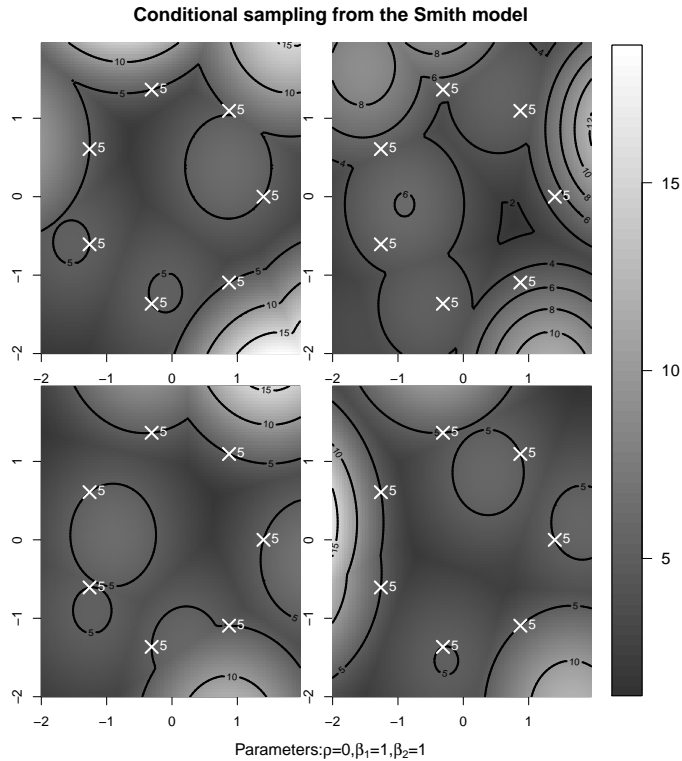}}
   \caption{\label{fig:samplings} Four samples from the conditional distribution of the discrete 
   Smith model (see Section~\ref{sec:Smith}), given the {\em observed} values (all equal to 5) at the locations marked by crosses.}
  \end{center}
 \end{figure}

%%%%%%%%%%%%%%%%%%%%%%%%%%%%%%%%%%%%%%%%%%%%%%%%%%%%%%%%%%%%%%%%%%%%%%%%%%
%%%%%%%%%%%%%%%%%%%%%%%%%%%%%%%%%%%%%%%%%%%%%%%%%%%%%%%%%%%%%%%%%%%%%%%%%%
%%%%%%%%%%%%%%%%%%%%%%%%%%%%%%%%%%%%%%%%%%%%%%%%%%%%%%%%%%%%%%%%%%%%%%%%%%
%%%%%%%%%%%%%%%%%%%%%%%%%%%%%%%%%%%%%%%%%%%%%%%%%%%%%%%%%%%%%%%%%%%%%%%%%%
%%%%%%%%%%%%%%%%%%%%%%%%%%%%%%%%%%%%%%%%%%%%%%%%%%%%%%%%%%%%%%%%%%%%%%%%%%
%%%%%%%%%%%%%%%%%%%%%%%%%%%%%%%%%%%%%%%%%%%%%%%%%%%%%%%%%%%%%%%%%%%%%%%%%%
%%%%%%%%%%%%%%%%%%%%%%%%%%%%%%%%%%%%%%%%%%%%%%%%%%%%%%%%%%%%%%%%%%%%%%%%%%
%%%%%%%%%%%%%%%%%%%%%%%%%%%%%%%%%%%%%%%%%%%%%%%%%%%%%%%%%%%%%%%%%%%%%%%%%%
%%%%%%%%%%%%%%%%%%%%%%%%%%%%%%%%%%%%%%%%%%%%%%%%%%%%%%%%%%%%%%%%%%%%%%%%%%

\subsection{Multivariate Max-Stable Distributions: a Brief Review}\label{sec:approximation}

Consider a general max-stable process $X = \indt X$, indexed by a set
$T$ (e.g.\ $T = [0,1], \mathbb R, \mathbb R^d$ or $\mathbb Z^d$). We
shall assume that the finite-dimensional distributions of $X$ are
known and the ultimate goal is to study the conditional distributions
of $X$.  For convenience and without loss of
generality, we focus on max-stable processes $X$ with
$\alpha$-Fr\'echet marginals ($\alpha>0$), such that all
max-linear combinations
$$
 \xi =\max_{j=1,\dots,n} a_j X_{t_j} \equiv \bveein a_j X_{t_j},\ \ a_j>0,\ t_j\in T,
$$ 
have the $\alpha$-Fr\'echet distribution:
$$
\P(\xi\le x) = \exp\{ -\sigma_\xi^\alpha x^{-\alpha}\},\ \ x\in (0,\infty),
$$
with scale coefficient $\sigma_\xi>0$.  Any max-stable process can be related to such an $\alpha$-Fr\'echet process 
by simple transformation of the marginals (see e.g.~\cite{resnick87extreme}).

Essentially all max-stable processes $\indt X$ admit the following {\it extremal integral representation}:
\equh\label{rep:maxStable}
 \indt X\eqd \bccbb{\Eintt_S f_t(s)M_\alpha(\d s)}_{t\in T}\,,
\eque
where the $f_t$'s are non-negative, measurable deterministic functions defined on a suitable measure space $(S,\mu)$ and such that
$\int_S f_t^\alpha(s)\mu(\d s)<\infty$. Here $M_\alpha$ is an {\it $\alpha$-Fr\'echet random sup-measure} with control 
measure $\mu$.  The extremal integral $\eintt_S f\d M_\alpha$ can be defined for all $f\in L^\alpha(S,\mu),\ f\ge 0$,
as the limit in probability of extremal integrals of simple functions. For more details, see~\cite{stoev06extremal} and the seminal 
work \cite{dehaan84spectral} for an alternative treatment. 

The functions $\indt f$ are called the {\it spectral functions} of the process $\indt X$. They determine the finite-dimensional distributions
of $\indt X$:
\[
\proba(X_{t_1}\leq x_1,\cdots, X_{t_n}\leq x_n) = \exp\bccbb{-\int_S\bpp{\bveein f_{t_i}(s)/x_i}^\alpha\mu(\d s)}\,,
\]
for all $t_i\in T,\ x_i\in \mathbb R_+:=(0,\infty),\ i = 1,\dots, n$. A popular equivalent representation of multivariate
max-stable laws is as follows:
$$
 \proba(X_{t_1}\leq x_1,\cdots, X_{t_n}\leq x_n) = \exp{\Big\{} - \int_{{\mathbb S}_+^{n-1} } {\Big(} \bveein w_i/x_i {\Big)}^\alpha 
\Gamma(\d \vv w){\Big\}}.
$$
Here ${\mathbb S}_+^{n-1} = \{ \vv w = (w_i)_{i=1}^n \in \mathbb R^n\, :\, 0\le w_j \le \max_{i = 1,\dots, n} w_i =1\}$ is the positive 
unit sphere in the sup-norm, and $\Gamma$ is a unique finite measure on $ {\mathbb S}_+^{n-1}$ called the {\em spectral measure}
of the distribution (see e.g.~\cite{resnick87extreme,dehaan06extreme}).

Any multivariate max-stable vector $(X_{t_j})_{j=1}^n$ can be approximated arbitrarily well in probability, by discretizing the extremal integral:
\[
X_{t_i} = \Eintt_S f_{t_i}(s)M_\alpha(\d s) \approx \bigvee_{i=1}^p a_{i,j} Z_j,
\]
where $Z_j,\ j = 1,\dots, p$ are independent standard $\alpha$-Fr\'echet variables and $a_{i,j}\ge 0$. This is equivalent to considering multivariate 
max-stable vectors with discrete spectral measures concentrated on at most $p$ points on the unit sphere ${\mathbb S}_+^{n-1}$. 
The error of approximation, moreover, can be controlled explicitly through convenient probability metrics (see e.g.~\cite{stoev06extremal}).

In this paper, we shall focus on the class of max-stable processes:
$$
 X_t \defe \bigvee_{j=1}^p \phi_j(t) Z_j,\ t\in T,
$$
where the $\phi_j(t)$'s are non-negative deterministic functions. These processes are called {\em spectrally discrete},
since their spectral measures $\Gamma$ are discrete. By taking sufficiently large $p$'s and with judicious $\phi_j(t)$'s, one can build flexible
models that can replicate the behavior of an arbitrary  max-stable process.  From this point of view, a satisfactory computational solution must be able to deal with max-linear models with large $p$'s.

The treatment of the exact conditional distributions of general spectrally continuous max-stable processes requires different tools and still remains
an open problem, to the best of our knowledge. As we shall see, the solution in the discrete case, although complete, is already quite involved.

{\bf Acknowledgments.} %The authors gratefully acknowledge the numerous helpful and constructive suggestions by an anonymous referee. 
The authors were partially supported by NSF grant DMS--0806094 at the University of Michigan.

%%%%%%%%%%%%%%%%%%%%%%%%%%%%%%%%%%%%%%%%%%%%%%%%%%%%%%%%%%%%%%%%%%%%%%%%%%
%%%%%%%%%%%%%%%%%%%%%%%%%%%%%%%%%%%%%%%%%%%%%%%%%%%%%%%%%%%%%%%%%%%%%%%%%%
%%%%%%%%%%%%%%%%%%%%%%%%%%%%%%%%%%%%%%%%%%%%%%%%%%%%%%%%%%%%%%%%%%%%%%%%%%
%%%%%%%%%%%%%%%%%%%%%%%%%%%%%%%%%%%%%%%%%%%%%%%%%%%%%%%%%%%%%%%%%%%%%%%%%%
%%%%%%%%%%%%%%%%%%%%%%%%%%%%%%%%%%%%%%%%%%%%%%%%%%%%%%%%%%%%%%%%%%%%%%%%%%
%%%%%%%%%%%%%%%%%%%%%%%%%%%%%%%%%%%%%%%%%%%%%%%%%%%%%%%%%%%%%%%%%%%%%%%%%%
%%%%%%%%%%%%%%%%%%%%%%%%%%%%%%%%%%%%%%%%%%%%%%%%%%%%%%%%%%%%%%%%%%%%%%%%%%
%%%%%%%%%%%%%%%%%%%%%%%%%%%%%%%%%%%%%%%%%%%%%%%%%%%%%%%%%%%%%%%%%%%%%%%%%%
%%%%%%%%%%%%%%%%%%%%%%%%%%%%%%%%%%%%%%%%%%%%%%%%%%%%%%%%%%%%%%%%%%%%%%%%%%
%%%%%%%%%%%%%%%%%%%%%%%%%%%%%%%%%%%%%%%%%%%%%%%%%%%%%%%%%%%%%%%%%%%%%%%%%%

\section{Conditional Probability in Max-Linear Models}\label{sec:conditionalProbability}

 \subsection{Intuition and Basic Theory}\label{sec:formula1}

Consider the max-linear model in~\eqref{rep:maxLinear_old}. We shall denote this model by:
\begin{equation}\label{rep:maxLinear}
 \vv X = A\odot\vv Z,
\end{equation}
where $A = (a_{i,j})_{n\times p}$ is a matrix with non-negative entries, $\vv X = (X_1,\dots, X_n)$ and $\vv Z = (Z_1,\dots,Z_p)$ are 
column vectors. We assume that the $Z_j$'s, $j = 1,\dots, p$, are independent non-negative random variables
having probability densities. 

In this section, we provide an explicit formula for the {\em regular conditional probability} of $\vv Z$ with respect to 
$\vv X$ (see Theorem \ref{thm:rcp} below and the Appendix for a precise definition). We start with some intuition and notation.
Throughout this paper, we assume that the matrix $A$ has at least one nonzero entry
in each of its rows and columns. This will be referred to as {\em Assumption A}.

%\begin{equation}\label{eq:A}
%\forall i\ \exists j\ \mbox{ such that } a_{i,j} >0 \ \ \mbox{ and } \ \ %\forall j\ \exists i\ \mbox{ such that } a_{i,j} >0.
%\end{equation}
Observe that if $\vv x = A\odot \vv z$ with $\vv x \in \bbR_+^n,\ \vv z\in \bbR_+^p$, then
\begin{equation}\label{eq:zhat}
  0\leq z_j\leq \widehat z_j \equiv  \widehat z_j (A,\vv x)\defe \min_{1\le i \le n} x_i/a_{i,j},\ \ \ j = 1,\dots, p.
\end{equation}
That is, the max-linear model \eqref{rep:maxLinear} imposes certain inequality and equality constraints on the $Z_j$'s, 
given a set of {\em observed} $X_i$'s.  Namely, some of the upper bounds $\widehat z_j(A,\vv x)$ in \eqref{eq:zhat} must be attained, or {\em hit}, i.e., $z_j = \widehat z_j(A,\vv x)$ in such a way that 
 $$
  x_i = a_{i,j(i)} z_{j(i)},\ \ \ i = 1,\dots, n,
 $$
 with judicious $j(i)\in \{1,\dots,p\}$.  The next example helps to understand the inequality and equality constraints.
 \begin{Example}\label{example:0}
 Suppose that $n = p = 3$ and
$$
A = 
\left(\begin{array}{ccc}
1 & 0 & 0\\
1 & 1 & 0\\
1 & 1 & 1
\end{array}\right)\,.
$$
Let $\vv x = A\odot \vv z$ for some $\vv z\in\mathbb R_+^3$. In this case, it necessarily follows that $x_1\leq x_2\leq x_3$. Moreover,~\eqref{eq:zhat} yields $\what {\vv z} = \vv x$.
\begin{itemize}
\item [(i)] If $\vv x = (1,2,3)$, then it trivially follows that $\vv z= \what{\vv z} = (1,2,3)$, which is an equality constraint on $\vv z$.  
\item [(ii)] If $\vv x = (1,1,3)$, then it follows that $z_1 = \what z_1 = 1, z_2\leq \what z_2 = 1$ and $z_3 = \what z_3 = 3$. Here, the ``equality constraints'' must hold for $z_1 = \what z_1$ and $z_3 = \what z_3$, while $z_2$ only needs to satisfy the ``inequality constraint'' $0\leq z_2\leq \what z_2$. 
\end{itemize}
\end{Example}
Write
 $$
  C(A,\vv x) \defe \{ \vv z\in \mathbb R_+^p\, :\, \vv x = A\odot \vv z\},
 $$
 and note that the conditional distribution of $\vv Z\mid\vv X = \vv x$ concentrates on the set $C(A,\vv x)$. The observation in Example~\ref{example:0} can be generalized and formulated as follows. 
\begin{itemize}
\item Every $\vv z \in C(A,\vv x)$ corresponds to a set of {\em active} (equality) constraints $J \subset \{1,\dots,p\}$, 
 which we refer to as a {\it hitting scenario} of $(A,\vv x)$, such that
 \equh\label{eq:scenario}
 z_j = \widehat z_j(A,\vv x),\ \ j\in {J}\mand z_j < \widehat z_j(A,\vv x),\ \  j\in 
 {J}^{c}\defe\{1,\dots,p\}\setminus J.
 \eque
 Observe that if $j\not \in J$, then there are no further constraints and $z_j$ can take any value in $[0,\widehat z_j)$, regardless
 of the values of the other components of the vector $\vv z \in C(A,\vv x)$.
 \item Every value $\vv x$ may give rise to many different hitting scenarios $J \subset\{1,\dots,p\}$.
 Let $\calJ(A,\vv x)$ denote the collection of all such $J$'s. We refer to $\calJ(A,\vv x)$ as to the {\em hitting distribution} of $\vv x$ 
 w.r.t.\ $A$:
 \[
 \calJ(A,\vv x) \equiv \bccbb{J\subset\{1,\dots,p\}: \mbox{ exist } \vv z\in C(A,\vv x), \mbox{such that\ } \eqref{eq:scenario} \mbox { holds }}.
 \]
\end{itemize}
To illustrate the notions of hitting scenario and hitting distribution, consider again Example~\ref{example:0}. Therein, we have $\calJ(A,\vv x) = \{\{1,2,3\}\}$ in case (i), and $\calJ(A,\vv x) = \{\{1,3\},\{1,2,3\}\}$ in case (ii).

The hitting distribution $\calJ(A,\vv x)$ is a finite set and thus can always be identified. However, the identification procedure is the key difficulty in providing an efficient algorithm for conditional sampling in practice. This issue is addressed in Section~\ref{sec:algorithm}.
In the rest of this section, suppose that $\calJ(A,\vv x)$
is given. Then, we can partition $C(A,\vv x)$ as follows
 \[
  C(A,\vv x) = \bigcup_{J \in {\cal J}(A,\vv x)} C_J(A,\vv x)\,,
  \]
  where 
  \[
  C_J(A,\vv x)=\{ \vv z \in \mathbb R_+^p\, :\, z_j = \widehat z_j,\ j\in J\mbox{ and }z_j <\widehat z_j,\ j\not\in J\}.
 \]
 The sets $C_J(A,\vv x),\ J\in {\cal J}(A,\vv x)$ are disjoint since they correspond to different {\em hitting scenarios} 
 in ${\cal J}(A,\vv x)$. Let 
 \equh\label{eq:rJ}
 r(\calJ(A,\vv x)) = \min_{J\in\calJ(A,\vv x)}|J|\,,
 \eque
 where $|J|$ is the number of elements in $J$. We call $r(\calJ(A,\vv x))$ the {\em rank} of the hitting distribution $\calJ(A,\vv x)$. It equals
 the minimal number of equality constraints among the hitting scenarios in $\calJ(A,\vv x)$. It will turn out that the hitting 
 scenarios $J \subset \calJ(A,\vv x)$ with $|J| > r(\calJ(A,\vv x))$ occur with (conditional) probability zero and can be ignored. 
 We therefore focus on the set of all {\em relevant} hitting scenarios:
 $$
 {\cal J}_r(A,\vv x) = \{J\in \calJ(A,\vv x): |J| = r(\calJ(A,\vv x))\}.
 $$
\begin{Thm}\label{thm:rcp} Consider the max-linear model in \eqref{rep:maxLinear},  where
$Z_j$'s are independent random variables with densities $f_{Z_j}$ and distribution functions $F_{Z_j}$, $j = 1,\dots, p$. Let
$A = (a_{i,j})_{n\times p}$ have non-negative entries satisfying Assumption A and let $\calR_{\mathbb R_+^p}$ be the class of all rectangles $\{ (\vv e, \vv f],\ \vv e, \vv f\in \mathbb R_+^p\}$ in ${\mathbb R_+^p}$.

For all $J\in {\cal J}(A,\vv x)$, $E\in \calR_{\mathbb R_+^p}$, and $\vv x\in\mathbb R_+^n$, define
\equh\label{eq:nuJ}
 \nu_{J}(\vv x, E) \defe \prod_{j\in J} 
   \delta_{\widehat z_j}(\pi_j(E)) \prod_{j\in J^c}\P\{ Z_j\in \pi_j(E)\mid Z_j < \widehat z_j\},
\eque
where $\pi_j(z_1,\dots,z_p) = z_j$ and $\delta_a$ is a unit point-mass at $a$.

Then, the  {\em regular conditional probability} $\nu(\vv x,E)$ of $\vv Z$ w.r.t.\ $\vv X$ equals:
\begin{equation}\label{eq:nu}
 \nu(\vv x,E) = \sum_{J \in {\calJ_r}(A,\vv x)} p_J(A,\vv x) \nu_J(\vv x,E),\ \  E\in {\calR_{\mathbb R_+^p}},
\end{equation}
for $\proba^{\vv X}$-almost all $\vv x\in A\odot(\mathbb R^p_+)$, where for all $J\in\calJ_r(A,\vv x)$, 
\begin{equation}\label{eq:pJ}
p_J(A,\vv x) = \frac{w_J}{\sum_{K\in {\calJ_r}(A,\vv x)} w_K} \qmwith  w_J
        = \prod_{j\in J} \widehat z_j f_{Z_j}(\widehat z_j) \prod_{j\in J^c} F_{Z_j}(\widehat z_j).
\end{equation}
\end{Thm}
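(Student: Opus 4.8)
The plan is to verify directly the defining property of the regular conditional probability. It suffices to exhibit a Markov kernel $\nu(\vv x,\cdot)$ on $\bbR_+^p$, measurable in $\vv x$, such that
\begin{equation*}
\int_B \nu(\vv x,E)\,\proba^{\vv X}(\d\vv x)=\proba\{\vv Z\in E,\ \vv X\in B\}
\end{equation*}
for every rectangle $E\in\calR_{\bbR_+^p}$ and every Borel $B\subset\bbR_+^n$; a standard monotone-class and uniqueness argument then identifies $\nu$ with the regular conditional probability (the precise definition being recalled in the Appendix). The candidate furnished by \eqref{eq:nu}--\eqref{eq:pJ} is plainly Borel in $\vv x$ and, for each $\vv x$, a probability measure on $\bbR_+^p$ (each $\nu_J(\vv x,\cdot)$ is a product of probability measures and $\sum_J p_J=1$), so the entire content is the displayed balance relation, and only for $\proba^{\vv X}$-almost every $\vv x$.

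The heart of the proof is a disintegration of both $\proba^{\vv X}$ and the joint law of $(\vv Z,\vv X)$ along natural combinatorial strata. Since the $Z_j$ have densities, almost surely each $X_i=\bveejp a_{i,j}Z_j$ has a unique maximizing index $j(i)$; the winner set $J_0=\{j(1),\dots,j(n)\}$ is then a.s.\ an irredundant cover of the rows, it coincides a.s.\ with the active set $\{j:Z_j=\what z_j(A,\vv X)\}$, and the partition of the rows into winner-blocks $R_j=\{i:j(i)=j\}$ is exactly what the observed value $\vv x$ reveals of the model (it imposes the ratios $x_i/x_{i'}$ for $i,i'\in R_j$, and nothing more). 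Grouping realizations of $\vv Z$ according to this structure expresses $\proba^{\vv X}$ as a sum over ``patterns'' $P$, each admissible pattern consisting of a row-partition together with a compatible family of ratios; each summand $\proba^{\vv X}|_P$ is absolutely continuous with respect to Lebesgue measure $\lambda_P$ on the $|P|$-dimensional coordinate subspace $V_P$ on which it is carried, and distinct patterns yield mutually singular measures. Consequently, for $\proba^{\vv X}$-a.e.\ $\vv x$ there is a well-defined pattern $P(\vv x)$, the conditional law at $\vv x$ is governed entirely by $\proba^{\vv X}|_{P(\vv x)}$, and the only realizations of $\vv Z$ carrying positive conditional mass are those in $\bigcup C_J(A,\vv x)$ with $J$ ranging over the covers that realize $P(\vv x)$. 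A dimension count now yields two combinatorial facts: a hitting scenario $J$ with $|J|$ active constraints fills a $|J|$-dimensional subspace of $\vv x$-space, so for $\vv x$ of pattern $P(\vv x)$ one must have $|J|\ge|P(\vv x)|$, with equality exactly for the covers realizing $P(\vv x)$; thus $r(\calJ(A,\vv x))=|P(\vv x)|$ and $\calJ_r(A,\vv x)$ is precisely the set of pattern-realizing covers, while every hitting scenario with $|J|>r$ sits on a $\lambda_{P(\vv x)}$-null subset of $V_{P(\vv x)}$ and contributes nothing. One further point, used to line up the strata over a common subspace, is that all covers in $\calJ_r(A,\vv x)$ share the same winner-block partition (hence the same $V_{P(\vv x)}$): this follows because, for $\vv x=A\odot\vv z$, the consistency relations $a_{i,j}\what z_j=x_i$ force any two columns of $A$ winning a given block to be proportional in restriction to that block.

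It remains to compute each stratum's contribution and assemble. Fix a cover $J\in\calJ_r(A,\vv x)$. Writing $\{\vv X\in B,\ \text{winner set}=J\}$ in terms of $\vv Z$ and integrating out the ``loser'' coordinates $j\notin J$ — each of which, given the $J$-coordinates, ranges freely and independently over $[0,\what z_j)$ — produces, after also intersecting with $\{\vv Z\in E\}$, the factor $\prod_{j\notin J}F_{Z_j}(\what z_j)$ in the total mass and the factor $\prod_{j\notin J}\proba\{Z_j\in\pi_j(E)\mid Z_j<\what z_j\}$ in the $E$-conditional part; the coordinates $z_j=\what z_j$, $j\in J$, are pinned to their upper bounds and carry $\prod_{j\in J}\delta_{\what z_j}(\pi_j(E))$, so that the $E$-dependence is exactly $\nu_J(\vv x,E)$. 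The change of variables from $(z_j)_{j\in J}$ to a fixed Lebesgue parametrization $(x_{i_l})_l$ of $V_{P(\vv x)}$ (one representative row $i_l$ per block) has diagonal Jacobian $\prod_l a_{i_l,j_l}=\prod_l(x_{i_l}/\what z_{j_l})$, so the stratum's density with respect to $\lambda_{P(\vv x)}$ equals $w_J(A,\vv x)$ divided by the quantity $\prod_l x_{i_l}$, which depends on $\vv x$ and the pattern but not on $J$. Summing over $J\in\calJ_r(A,\vv x)$ and comparing the two sides of the balance relation stratum by stratum, the $\vv x$-only factor cancels between numerator and normalization, and the mixture weights collapse to $p_J(A,\vv x)=w_J/\sum_{K\in\calJ_r(A,\vv x)}w_K$, which is \eqref{eq:pJ}.

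I expect the main obstacle to be the measure-theoretic and combinatorial bookkeeping of the second paragraph: constructing the strata precisely, establishing the singular decomposition of $\proba^{\vv X}$, matching $\calJ_r(A,\vv x)$ with the pattern-realizing covers, and disposing cleanly of the several null sets (ties among the $a_{i,j}Z_j$, boundary configurations where some $z_j$ accidentally equals $\what z_j$, and non-minimal scenarios). The subsequent change of variables is routine once this geometry is in place. A more elementary alternative, which trades the stratification for a careful asymptotic analysis, is to show that for $\proba^{\vv X}$-a.e.\ $\vv x$ the ratio $\proba\{\vv Z\in E,\ \vv X\in Q_\delta(\vv x)\}/\proba\{\vv X\in Q_\delta(\vv x)\}$ over shrinking boxes $Q_\delta(\vv x)$ converges, as $\delta\downarrow0$, to the right-hand side of \eqref{eq:nu}; there the windows in which a capped coordinate $Z_j$ must lie have width $\propto\what z_j\,\delta$, which is the origin of the factor $\what z_j$ in $w_J$, but the combinatorial core is unchanged.
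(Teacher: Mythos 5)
Your proposal is correct and follows essentially the same route as the paper: a decomposition of $\proba^{\vv X}$ over the a.s.\ realizable hitting matrices (your ``patterns'' are exactly the winner-block partitions $I_s$ of Lemma~\ref{lem:H}), a one-to-one parametrization of each stratum by one representative coordinate per block, and the Jacobian computation giving the stratum density $\prod_s\sum_{j\in J^{(s)}}w_j^{(s)}(\vv x)\,\prod_s \d x_s/x_s$ of Lemma~\ref{lem:density}, after which the balance relation is verified term by term over the covers $(j_1,\dots,j_r)\in J^{(1)}\times\cdots\times J^{(r)}$. Your observation that columns winning a common block must be proportional on that block, and your remark that the shrinking-box limit is only a heuristic alternative, both match the paper's Lemma~\ref{lem:H} and Proposition~\ref{prop:limit} respectively.
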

\noindent In the special case when the $Z_j$'s are $\alpha$-Fr\'echet with scale coefficient 1, we have $w_{J} = \prod_{j\in J}({\widehat z_j})^{-\alpha}$.

\begin{Rem}\label{rem:measurability}
We state \eqref{eq:nu} {\em only} for rectangle sets $E$ because the projections $\pi_j(B)$ of an arbitrary 
Borel set $B\subset \mathbb R_+^p$ are not always Borel (see e.g.~\cite{srivastava98course}). %This subtlety prevents us from directly defining the measures $\nu_J(\vv x,E)$ for all Borel sets $E$. 
Nevertheless, 
the extension of measure theorem ensures that Formula~\eqref{eq:nu} specifies completely the regular 
conditional probability.
\end{Rem}
We do not provide a proof of Theorem~\ref{thm:rcp} directly. Instead, we will first provide an equivalent formula for
$\nu(\vv x,E)$ in Theorem~\ref{thm:factorization} in Section~\ref{sec:algorithm}, and then prove that $\nu(\vv x,E)$ is the desired regular
conditional probability. All the proofs are deferred to Section~\ref{sec:proofs}. The next 
example gives the intuition behind Formula \eqref{eq:nu}. 
\begin{Example}\label{Ex:1}
Continue with Example~\ref{example:0}.
\begin{itemize}
\item [(i)]
If $\vv X = \vv x = (1, 2, 3)$, then $\what {\vv z} = \vv x$, $\calJ(A,\vv x) = \{\{1,2,3\}\}$. Therefore, $r(\calJ(A,\vv x)) = 3$ and Formula~\eqref{eq:nu} yields
$$
\nu(\vv x,E) = \nu_J(\vv x,E) = 
 \delta_{\what z_1} (\pi_1(E)) \delta_{\what z_2} (\pi_2(E)) \delta_{\what z_3} (\pi_3(E)) \equiv
  \delta_{\vv {\what z}}(E)\,,
$$
a degenerate distribution with single unit point mass at $\what{\vv z}$.
\item [(ii)] If $\vv X = \vv x = (1, 1, 3)$, then, $\what {\vv z} = \vv x$, $\calJ(A,\vv x) = \{\{1,3\}, \{1,2,3\}\}$, and $r(\calJ(A,\vv x)) = 2$. Therefore, $\calJ_r(A,\vv x) = \{\{1,3\}\}$ and Formula~\eqref{eq:nu} yields:
$$
\nu(\vv x,E) = \nu_{\{1,3\}}(\vv x,E) =  \delta_{\what z_1} (\pi_1(E)) \proba( Z_2 \in \pi_2(E)\mid Z_2 < \what z_2) \delta_{\what z_3} (\pi_3(E)).
$$
In this case, the conditional distribution concentrates on the one-dimensional set $\{1\}\times(0,1)\times\{3\}$.

\item[(iii)] Finally, if $\vv X = \vv x = (1, 1, 1)$, then $\what{\vv z} = \vv x$ and $\calJ(A,\vv x) = \{\{1\}, \{1,2\},\{1,2,3\}\}$. Then, $\calJ_r(A,\vv x) = \{\{1\}\}$ and
\[
\nu(\vv x,E) = \nu_{\{1\}}(\vv x,E) =  \delta_{\what z_1} (\pi_1(E)) \prod_{j=2}^3\proba( Z_j \in \pi_j(E)\mid Z_j < \what z_j).
\]
The conditional distribution concentrates on the set $\{1\}\times (0,1)\times (0,1)$.
\end{itemize}
\end{Example}
We conclude this section by showing that the conditional distributions~\eqref{eq:nu} arise as suitable limits. 
This result can be viewed as a heuristic justification of Theorem~\ref{thm:rcp}. Let $\epsilon>0$, consider 
\equh\label{eq:CJepsilon}
C_J^\epsilon(A,\vv x) \defe \bccbb{\vv z\in\mathbb R^p_+: z_j\in[\widehat z_j(1-\epsilon),\widehat z_j(1 + \epsilon)], j\in J, z_k<\widehat z_k(1-\epsilon)\,,k\in J^c},
\eque
and set
\equh\label{eq:Cunion}
C^\epsilon(A,\vv x) \defe \bigcup_{J\in\calJ(A,\vv x)} C_J^\epsilon(A,\vv x)\,.
\eque
Note that the sets $A\odot (C^\epsilon(A,\vv x))$ shrink to the point $\vv x,$ as $\epsilon\downarrow0$.  

\begin{Prop}\label{prop:limit}
Under the assumptions of Theorem~\ref{thm:rcp}, for
all $\vv x\in A\odot(\mathbb R_+^p)$, we have, as $\epsilon\downarrow 0$,
\equh \label{eq:limit}
  \proba(\vv Z\in E \mid\vv Z\in C^\epsilon(A,\vv x))  \longrightarrow  \nu(\vv x,E),\ 
   E \in {\calR_{\mathbb R_+^p}}.
\eque
\end{Prop}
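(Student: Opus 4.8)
\medskip
\noindent\textbf{Proof proposal.} The plan is to write the conditional probability as a ratio of two probabilities, decompose both of them over the hitting scenarios $J\in\calJ(A,\vv x)$ by independence of the $Z_j$, rescale by a suitable power of $\epsilon$, and pass to the limit; the substance of the argument is the bookkeeping with the normalization $(2\epsilon)^{r}$, where $r=r(\calJ(A,\vv x))$. First I would observe that the sets $C_J^\epsilon(A,\vv x)$, $J\in\calJ(A,\vv x)$, in \eqref{eq:CJepsilon} are pairwise disjoint: if $J_1\neq J_2$, pick $j$ belonging to exactly one of them, say $j\in J_1\setminus J_2$; then $\vv z\in C_{J_1}^\epsilon(A,\vv x)$ forces $z_j\ge\widehat z_j(1-\epsilon)$, while $\vv z\in C_{J_2}^\epsilon(A,\vv x)$ forces $z_j<\widehat z_j(1-\epsilon)$ (here $\widehat z_j=\widehat z_j(A,\vv x)\in(0,\infty)$ by Assumption~A). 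Consequently, for every set $E$,
\[
 \proba(\vv Z\in E\mid\vv Z\in C^\epsilon(A,\vv x))=\frac{\sum_{J\in\calJ(A,\vv x)}\proba(\vv Z\in E\cap C_J^\epsilon(A,\vv x))}{\sum_{J\in\calJ(A,\vv x)}\proba(\vv Z\in C_J^\epsilon(A,\vv x))}.
\]

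Next, since the $Z_j$ are independent and each $C_J^\epsilon(A,\vv x)$ is a product of coordinate intervals, I would factorize each summand over $j$; using continuity of $F_{Z_j}$,
\[
 \proba(\vv Z\in C_J^\epsilon(A,\vv x))=\prod_{j\in J}\left(F_{Z_j}(\widehat z_j(1+\epsilon))-F_{Z_j}(\widehat z_j(1-\epsilon))\right)\prod_{j\in J^c}F_{Z_j}(\widehat z_j(1-\epsilon)).
\]
Then I would divide numerator and denominator by $(2\epsilon)^{r}$. At a point of differentiability $\widehat z_j$ of $F_{Z_j}$ one has $(2\epsilon)^{-1}(F_{Z_j}(\widehat z_j(1+\epsilon))-F_{Z_j}(\widehat z_j(1-\epsilon)))\to\widehat z_j f_{Z_j}(\widehat z_j)$ as $\epsilon\downarrow0$, together with $F_{Z_j}(\widehat z_j(1-\epsilon))\to F_{Z_j}(\widehat z_j)$; hence, for $|J|=r$,
\[
 (2\epsilon)^{-r}\,\proba(\vv Z\in C_J^\epsilon(A,\vv x))\longrightarrow\prod_{j\in J}\widehat z_j f_{Z_j}(\widehat z_j)\prod_{j\in J^c}F_{Z_j}(\widehat z_j)=w_J,
\]
whereas for $|J|>r$ the same quantity carries a surplus factor $(2\epsilon)^{|J|-r}\to0$ and drops out --- which is exactly why $\calJ_r(A,\vv x)$, rather than all of $\calJ(A,\vv x)$, survives in \eqref{eq:nu}. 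Summing over $J$ yields $(2\epsilon)^{-r}\,\proba(\vv Z\in C^\epsilon(A,\vv x))\to\sum_{J\in\calJ_r(A,\vv x)}w_J>0$.

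For the numerator I would run the same computation with $E=(\vv e,\vv f]$. For $|J|=r$ and $j\in J$, the interval $\pi_j(E)\cap[\widehat z_j(1-\epsilon),\widehat z_j(1+\epsilon)]$ equals, for $\epsilon$ small, the full window $[\widehat z_j(1-\epsilon),\widehat z_j(1+\epsilon)]$ if $\widehat z_j$ is interior to $\pi_j(E)$, and $\emptyset$ otherwise, which produces the factor $\widehat z_j f_{Z_j}(\widehat z_j)\,\delta_{\widehat z_j}(\pi_j(E))$; for $j\in J^c$, monotone convergence of $\pi_j(E)\cap[0,\widehat z_j(1-\epsilon))$ up to $\pi_j(E)\cap[0,\widehat z_j)$ produces $F_{Z_j}(\widehat z_j)\,\proba(Z_j\in\pi_j(E)\mid Z_j<\widehat z_j)$. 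Therefore $(2\epsilon)^{-r}\,\proba(\vv Z\in E\cap C_J^\epsilon(A,\vv x))\to w_J\,\nu_J(\vv x,E)$ for $|J|=r$ and $\to0$ for $|J|>r$, and dividing by the denominator limit gives
\[
 \proba(\vv Z\in E\mid\vv Z\in C^\epsilon(A,\vv x))\longrightarrow\frac{\sum_{J\in\calJ_r(A,\vv x)}w_J\,\nu_J(\vv x,E)}{\sum_{J\in\calJ_r(A,\vv x)}w_J}=\sum_{J\in\calJ_r(A,\vv x)}p_J(A,\vv x)\,\nu_J(\vv x,E)=\nu(\vv x,E).
\]

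The step I expect to require the most care is the coincidence case: if $\widehat z_j\in\{e_j,f_j\}$ for some $J\in\calJ_r(A,\vv x)$ and $j\in J$, the symmetric window $[\widehat z_j(1-\epsilon),\widehat z_j(1+\epsilon)]$ covers only ``half'' of a neighborhood of $\widehat z_j$ within $\pi_j(E)$, so that coordinate factor tends to $\tfrac12\widehat z_j f_{Z_j}(\widehat z_j)$ instead of $\widehat z_j f_{Z_j}(\widehat z_j)\,\delta_{\widehat z_j}(\pi_j(E))$, and a literal reading of \eqref{eq:limit} can fail on such boundary rectangles. I would circumvent this by first proving the convergence only for rectangles $E=(\vv e,\vv f]$ whose corners avoid the finite set $\{\widehat z_1(A,\vv x),\dots,\widehat z_p(A,\vv x)\}$ --- for which all the above applies verbatim --- and then noting that these rectangles still form a $\pi$-system generating $\calB(\mathbb R_+^p)$, so that the extension-of-measure argument invoked in Remark~\ref{rem:measurability} pins down the limiting conditional law as $\nu(\vv x,\cdot)$. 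A minor point along the way is that $\widehat z_j$ must be a point of differentiability of $F_{Z_j}$, which holds for Lebesgue-a.e.~value, and for every value if $f_{Z_j}$ is continuous.
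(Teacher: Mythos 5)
Your proposal is correct and follows essentially the same route as the paper: decompose $C^\epsilon(A,\vv x)$ over the disjoint sets $C_J^\epsilon(A,\vv x)$, factorize by independence, and observe that $\proba(\vv Z\in C_J^\epsilon)\sim 2 w_J\,\epsilon^{|J|}$ so that only the scenarios with $|J|=r$ survive the normalization. Your explicit treatment of the boundary case $\widehat z_j\in\{e_j,f_j\}$ (restricting to a generating $\pi$-system of rectangles whose corners avoid $\{\widehat z_1,\dots,\widehat z_p\}$) is a point the paper's proof passes over with ``one can see that,'' and is a welcome refinement.
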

\begin{proof}
Recall the definition of $C^\epsilon_J$ in~\eqref{eq:CJepsilon}. Observe that for all $\epsilon>0$, the sets $\{C_J^\epsilon(A,\vv x)\}_{J\in\calJ(A,\vv x)}$ are mutually disjoint. Thus, writing $C^\epsilon \equiv C^\epsilon(A,\vv x)$ and $C^\epsilon_J\equiv C^\epsilon_J(A,\vv x)$, by~\eqref{eq:Cunion} we have
\eqnhn
\proba(\vv Z\in E\mid\vv Z\in C^\epsilon) & = & \sum_{J\in\calJ}\proba(\vv Z\in E\mid\vv Z\in C_J^\epsilon)\proba(\vv Z\in C_J^\epsilon\mid\vv Z\in C^\epsilon)\nonumber\\
& = & \sum_{J\in\calJ}\proba(\vv Z\in E\mid\vv Z\in C_J^\epsilon) \frac{\proba(\vv Z\in C_J^\epsilon)}{\sum_{K\in\calJ}\proba(\vv Z\in C_K^\epsilon)}\label{eq:approximation},
\eqnen
where the terms with $\proba(\vv Z\in C_J^\epsilon)=0$ are ignored. One can see that $\proba(\vv Z\in E\mid\vv Z\in C_J^\epsilon)$ converge to $\nu_J(E,\vv x)$ in \eqref{eq:nuJ}, as $\epsilon\downarrow 0$. The independence of the 
$Z_j$'s also implies that
\begin{multline}
\proba(\vv Z\in C_J^\epsilon)= 
 \prod_{j\in J}\proba(Z_j\in[\widehat z_j(1 - \epsilon), \widehat z_j(1 + \epsilon)])\prod_{k\in J^c}\proba(Z_k\leq \widehat z_k (1 - \epsilon))\\
 = \prod_{j\in J}\bpp{f_{Z_j}(\what z_j)\what z_j\cdot2\epsilon + o(\epsilon)}\prod_{k\in J^c}\bpp{F_{Z_j}(\wht z_j)+o(\epsilon)}\,.\label{eq:PZ}
\end{multline}
Observe that for $J\in\calJ_r(A,\vv x)$, the latter expression equals $2w_J\, \epsilon^{|J|} (1 + o(1)),\ \epsilon \downarrow 0$ and the terms with $|J|>r$ will become negligible  since they 
are of smaller order. Therefore, Relation \eqref{eq:PZ} yields \eqref{eq:nu}, and the proof is thus complete. \end{proof}
The proof of Proposition~\ref{prop:limit} provides an insight to the expressions of the weights $w_J$'s in~\eqref{eq:pJ} 
and the components $\nu_J$'s in~\eqref{eq:nuJ}. 
In particular, it explains why only hitting scenarios of rank $r$ are involved in the expression of the conditional probability.
The formal proof of Theorem~\ref{thm:rcp}, however, requires a 
different argument.

%%%%%%%%%%%%%%%%%%%%%%%%%%%%%%%%%%%%%%%%%%%%%%%%%%%%%%%%%%%%%%%%%%%%%%%%%%
%%%%%%%%%%%%%%%%%%%%%%%%%%%%%%%%%%%%%%%%%%%%%%%%%%%%%%%%%%%%%%%%%%%%%%%%%%
%%%%%%%%%%%%%%%%%%%%%%%%%%%%%%%%%%%%%%%%%%%%%%%%%%%%%%%%%%%%%%%%%%%%%%%%%%
%%%%%%%%%%%%%%%%%%%%%%%%%%%%%%%%%%%%%%%%%%%%%%%%%%%%%%%%%%%%%%%%%%%%%%%%%%
%%%%%%%%%%%%%%%%%%%%%%%%%%%%%%%%%%%%%%%%%%%%%%%%%%%%%%%%%%%%%%%%%%%%%%%%%%
%%%%%%%%%%%%%%%%%%%%%%%%%%%%%%%%%%%%%%%%%%%%%%%%%%%%%%%%%%%%%%%%%%%%%%%%%%
%%%%%%%%%%%%%%%%%%%%%%%%%%%%%%%%%%%%%%%%%%%%%%%%%%%%%%%%%%%%%%%%%%%%%%%%%%
%%%%%%%%%%%%%%%%%%%%%%%%%%%%%%%%%%%%%%%%%%%%%%%%%%%%%%%%%%%%%%%%%%%%%%%%%%
%%%%%%%%%%%%%%%%%%%%%%%%%%%%%%%%%%%%%%%%%%%%%%%%%%%%%%%%%%%%%%%%%%%%%%%%%%
%%\newpage
\subsection{Conditional Sampling: Computational Efficiency}\label{sec:algorithm}

We discuss here important computational issues related to sampling from the regular conditional 
probability in \eqref{eq:nu}. It turns out that identifying all hitting scenarios amounts to 
solving the {\em set covering problem}, which is NP-hard (see e.g.~\cite{caprara00algorithms}). The probabilistic
structure of the max-linear models, however, will lead us to an alternative efficient solution, valid with probability one.
In particular, we will provide a new formula for the regular conditional probability, showing that $\vv Z$ 
can be decomposed into conditionally independent vectors, given $\vv X = \vv x$. As a consequence, with probability one we are not in the `bad' situation that the corresponding set covering problem requires exponential time to solve. Indeed, this will lead us to 
an efficient and linearly-scalable algorithm for conditional sampling, which works well for max-linear models with large
dimensions $n\times p$ arising in applications.

To fix ideas, observe that Theorem \ref{thm:rcp} implies the following simple algorithm.

\medskip
\noindent{\sc Algorithm I:} 
\begin{enumerate}
\item\label{step:hatz} Compute $\widehat z_j$ for $j = 1,\dots, p$.
\item\label{step:decompose} Identify $\calJ(A,\vv x)$, compute $r = r(\calJ(A,\vv x))$ and focus on the set of {\em relevant hitting scenarios} $\calJ_r = \calJ_r(A,\vv x)$.
\item\label{step:weights} Compute $\{w_J\}_{J\in\calJ_r}$ and $\{p_J\}_{J\in\calJ_r}$.
\item\label{step:resample} Sample $\vv Z\sim\nu(\vv x,\cdot)$ according to~\eqref{eq:nu}.
\end{enumerate}
Step~\ref{step:hatz} is immediate. Provided that Step~\ref{step:decompose} is done, Step~\ref{step:weights} is trivial and, Step~\ref{step:resample} can be carried out by first picking a {\em hitting scenario} $J\in {\cal J}_r(A,\vv x)$ (with probability $p_J(A,\vv x)$), 
setting $Z_j = \widehat z_j$,  for $j\in J$ and then  resampling independently the remaining $Z_j$'s from the truncated distributions: 
$Z_j\mid \{ Z_j < \widehat z_j\}$, for all
 $j\in \{1,\dots,p\}\setminus J$.

The most computationally intensive aspect of
this algorithm is to identify the set of all relevant hitting scenarios $\calJ_r(A,\vv x)$ in Step \ref{step:decompose}.  This is closely related to the NP-hard {\em set covering problem} in theoretical computer science (see e.g.~\cite{caprara00algorithms}), which is formulated next. Let $H = (h_{i,j})_{n\times p}$ be a matrix of $0$'s and $1$'s, and let
$c = (c_j)_{j=1}^p\in\mathbb Z_+^p$ be a $p$-dimensional {\em cost} vector. For simplicity, introduce the notation:
$$
\lb m\rb  \equiv \{1,2,\dots,m\},\ \ m\in \mathbb N.
$$
For the matrix $H$, we say that the column $j\in \lb p\rb $ {\em covers} the row $i\in\lb n\rb $, if $h_{i,j} = 1$. The goal of the set-covering problem is to find a minimum-cost 
subset $J\subset\lb p\rb $, such that every row is covered by at least one column $j\in J$. This is equivalent
to solving
\equh\label{eq:SCP}
\min_{\substack{\delta_j\in\{0,1\}\\j\in\lb p\rb }}\sum_{j\in\lb p\rb } c_j\delta_j\,,\ \mbox{ subject to }\ 
 \sum_{j\in\lb p\rb }h_{i,j}\delta_j\geq 1\,, i\in\lb n\rb \,.
\eque
We can relate the problem of identifying $\calJ_r(A,\vv x)$ to the set covering problem by defining
\equh\label{eq:hij}
h_{i,j} = \ind_{\{a_{i,j}\what z_j = x_i\}},
\eque
where $A = (a_{i,j})_{n\times p}$ and $\vv x = (x_i)_{i=1}^n$ are as in~\eqref{rep:maxLinear}, and  $c_j = 1\,, j\in\lb p\rb $. It is easy to see that, 
every $J\in\calJ_r(A,\vv x)$ corresponds to a solution of~\eqref{eq:SCP}, and vice versa. Namely, for $\{\delta_j\}_{j\in\lb p\rb }$ 
minimizing~\eqref{eq:SCP}, we have $J = \{j\in\lb p\rb :\delta_j = 1\}\in\calJ_r(A,\vv x)$.

The set $\calJ_r(A,\vv x)$ corresponds to the set of {\em all} solutions of~\eqref{eq:SCP}, which depends only 
on the matrix $H$. Therefore, in the sequel we write $\calJ_r(H)$ for $\calJ_r(A,\vv x)$, and 
\begin{equation}\label{eq:H}
H = (h_{i,j})_{n\times p}\equiv {\mathbb H}(A,\vv x),
\end{equation}
with $h_{i,j}$ as in \eqref{eq:hij} will be referred to as the {\em hitting matrix}.

\begin{Example}\label{Ex:1continued}
Recall Example \ref{Ex:1}. The following hitting matrices correspond to the three cases of $\vv x$ 
discussed therein:
\[
H\topp i = \left(
\begin{array}{ccc}
1 & 0 & 0 \\
0 & 1 & 0 \\
0 & 0 & 1 
\end{array}
\right),\ 
H\topp {ii} = \left(
\begin{array}{ccc}
1 & 0 & 0 \\
1 & 1 & 0 \\
0 & 0 & 1
\end{array}
\right)\ \mand \
H\topp {iii} = \left(
\begin{array}{ccc}
1 & 0 & 0 \\
1 & 1 & 0 \\
1 & 1 & 1
\end{array}
\right)
\,.
\]
\end{Example}

Observe that solving for $\calJ_r(H)$ is even more challenging than
solving the set covering problem \eqref{eq:SCP}, where only one minimum-cost subset $J$ is
needed, and often an approximation of the optimal solution is
acceptable. Here, we need to identify exhaustively all $J$'s such
that~\eqref{eq:SCP} holds. Fortunately, this problem can be substantially simplified, thanks to the probabilistic structure of the max-linear model.

{We first study the distribution of $H$.} In view of \eqref{eq:H}, we have that 
$H = {\mathbb H}(A,\vv X)$, with $\vv X= A\odot \vv Z$, is a random matrix. It will turn out that, with
probability  one, $H$ has a {\em nice} structure, leading to an efficient conditional
sampling algorithm.

For any hitting matrix $H$, we will decompose the set $\lb p\rb \equiv \{1,\dots,p\}$ into 
a certain disjoint union $\lb p\rb  = \bigcup_{s=1}^r\wb J\topp s$. The vectors $(Z_j)_{j\in\wb J\topp {s}},\ 
s = 1,\dots, r$ will turn out to be conditionally independent (in $s$), given $\vv X = \vv x$.  Therefore,
$\nu(\vv x,E)$ will be expressed as a product of (conditional) probabilities. 

We start by decomposing the set $\lb n\rb \equiv \{1,\dots,n\}$. First, for all 
$i_1,i_2\in\lb n\rb \,, j\in\lb p\rb $, we write $i_1\stackrel{j}\sim i_2\,,\mbox{ if } h_{i_1,j} = h_{i_2,j} = 1$.
Then, we define an equivalence relation on $\lb n\rb$:
\equh\label{eq:equivalence}
i_1\sim i_2,\  \mbox{ if } \
i_1 = \wt i_0\stackrel{j_1}\sim \wt i_1\stackrel{j_2}\sim\cdots\stackrel{j_m}\sim\wt i_m = i_2\,,
\eque
with some $m\leq n, i_1 = \wt i_0,\wt i_1,\dots,\wt i_m = i_2\in\lb n\rb , j_1,\dots,j_m\in\lb p\rb $.
That is, `$\sim$' is the transitive closure of `$\stackrel j\sim$'. Consequently, we obtain a partition of $\lb n\rb $, denoted by 
\equh\label{eq:[n]}
\lb n\rb  = \bigcup_{s=1}^rI_s\,,
\eque
where $I_s, s = 1,\dots,r$ are the equivalence classes w.r.t.~\eqref{eq:equivalence}.
Based on~\eqref{eq:[n]}, we define further
\eqnhn
J\topp s & = & \bccbb{j\in\lb p\rb : h_{i,j} = 1\mfa i\in I_s}\label{eq:J^s}\,,\\
\wb J\topp s & = & \bccbb{j\in\lb p\rb : h_{i,j} = 1\mbox{ for some } i\in I_s}\,.\label{eq:wbJ^s}
\eqnen
The sets $\{J\topp s, \wb J\topp s\}_{s\in \lb r\rb }$ will determine the factorization form of $\nu(\vv x,E)$. 

\begin{Thm}\label{thm:factorization}
Let $\vv Z$ be as in Theorem~\ref{thm:rcp}. Let also $H$ be the hitting matrix corresponding to $(A,\vv X)$ with $\vv X = A\odot \vv Z$, and $\{J\topp s, \wb J\topp s\}_{s\in \lb r\rb }$ be the sets defined in~\eqref{eq:J^s} and~\eqref{eq:wbJ^s}. Then, with probability one, we have
\begin{itemize}
\item [(i)] $r = r(\calJ(A,\vv X))$,
\item [(ii)] for all $J\subset\lb p\rb $, $J\in \calJ_r(A,A\odot\vv Z)$ if and only if $J$ can be written as 
\equh\label{eq:Jr}
J = \{j_1,\dots,j_r\}\qmwith j_s\in J\topp s\,, s\in\lb r\rb \,,
\eque
\item [(iii)] for $\nu(\vv x,E)$ defined in~\eqref{eq:nu},
\equh\label{eq:nus}
\nu(\vv X,E) = \prod_{s=1}^r\nu\topp s(\vv X,E)\mwith \nu\topp s(\vv X,E) = \frac{\sum_{j\in J\topp s}w\topp s_j(\vv X) \nu\topp s_j(\vv X,E)}
  {\sum_{j\in J\topp s}w\topp s_j(\vv X)}\,,
\eque
where for all $j\in J\topp s$,
\eqnhn
w_j\topp s(\vv x) & \defe & \what z_jf_{Z_j}(\what z_j)\prod_{k\in \wb J\topp s\setminus\{j\}} F_{Z_k}(\what z_k)\label{eq:nuws}\,,\\
\nu_j\topp s(\vv x,E) & \defe & \delta_{\pi_j(E)}(\what z_j)\prod_{k\in \wb J\topp s\setminus\{j\}} \proba(Z_k\in \pi_k(E)|Z_k<\what z_k),\label{eq:nujs}
\eqnen
with $\what z_j = \what z_j(\vv x)$ as in \eqref{eq:zhat}.
\end{itemize}
\end{Thm}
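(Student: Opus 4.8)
The plan is to reduce everything to a single structural fact about the random hitting matrix $H=\mathbb H(A,\vv X)$ that holds with probability one; parts (i)--(iii) then follow, the last one by a purely algebraic rearrangement of Formula~\eqref{eq:nu}. For each $i\in\lb n\rb$ I would set $j^*(i)\defe\argmax_{j\in\lb p\rb}a_{i,j}Z_j$, which is a.s.\ well defined since the $Z_j$ are independent with densities, so that $X_i=a_{i,j^*(i)}Z_{j^*(i)}$. A one-line computation gives $\what z_j(A,\vv X)=Z_j$ exactly when $j\in J_0\defe\{j^*(i):i\in\lb n\rb\}$, and hence that column $\ell=j^*(i)$ of $H$ covers precisely the block $B_\ell\defe\{i'\in\lb n\rb:j^*(i')=\ell\}$; the blocks $B_\ell,\ \ell\in J_0$, partition $\lb n\rb$.

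The key claim --- and the step I expect to be the main obstacle --- is that a.s.\ \emph{no column of $H$ covers rows from two different blocks}. If column $j$ covers $i_1\ne i_2$, then $\what z_j=X_{i_1}/a_{i_1,j}=X_{i_2}/a_{i_2,j}$ (the denominators are positive since those rows are covered), i.e.\ $\frac{a_{i_1,j^*(i_1)}}{a_{i_1,j}}Z_{j^*(i_1)}=\frac{a_{i_2,j^*(i_2)}}{a_{i_2,j}}Z_{j^*(i_2)}$; when $j^*(i_1)\ne j^*(i_2)$ this is an equation $cZ_a=c'Z_b$ with $a\ne b$ and $c,c'>0$, which has probability zero. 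Taking the finite union over $i_1,i_2,j$ and over the candidate pairs $(j^*(i_1),j^*(i_2))$, and intersecting with the a.s.\ event that all $j^*(i)$ are unique, gives a probability-one event on which the claim holds. On this event the equivalence classes $I_1,\dots,I_r$ of~\eqref{eq:equivalence} coincide with the blocks $B_\ell$ (column $\ell$ connects all rows of its own block), so each $J\topp s$ contains the corresponding $\ell$ and is nonempty, the sets $\wb J\topp s$ are pairwise disjoint, and --- since by Assumption~A every column covers at least one row --- $\lb p\rb=\bigsqcup_{s=1}^r\wb J\topp s$.

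Given this, (i) and (ii) are routine set-covering bookkeeping. A cover must contain, for each $s$, at least one column from $\wb J\topp s$, so its size is $\ge r$; conversely any transversal $\{j_1,\dots,j_r\}$ with $j_s\in J\topp s$ is a cover of size exactly $r$ (the $j_s$ are distinct, lying in disjoint sets). Hence $r(\calJ(A,\vv X))=r$, proving (i); and a cover of size $r$ must pick exactly one column per block, which is then forced to cover that entire block, i.e.\ to lie in $J\topp s$ --- so $\calJ_r(A,\vv X)$ is precisely the set of such transversals, proving (ii). For (iii) I would substitute this description into~\eqref{eq:nu}: for $J=\{j_1,\dots,j_r\}$ with $j_s\in J\topp s$, the decomposition $\lb p\rb=\bigsqcup_s\wb J\topp s$ gives $J^c=\bigsqcup_{s=1}^r(\wb J\topp s\setminus\{j_s\})$, so both $w_J$ from~\eqref{eq:pJ} and $\nu_J(\vv x,E)$ from~\eqref{eq:nuJ} factor over $s$ as $\prod_s w\topp s_{j_s}(\vv x)$ and $\prod_s\nu\topp s_{j_s}(\vv x,E)$, with the factors exactly those in~\eqref{eq:nuws}--\eqref{eq:nujs}. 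Summing the product $w_J\nu_J(\vv x,E)$ over all transversals converts each sum over $\calJ_r$ into a product of sums over the individual $J\topp s$, and dividing by $\sum_{K\in\calJ_r}w_K=\prod_s\big(\sum_{j\in J\topp s}w\topp s_j\big)$ yields~\eqref{eq:nus}. Thus, once the probability-one structural step is in place, the rest is algebra.
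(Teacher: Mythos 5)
Your proposal is correct and follows essentially the same route as the paper: you exclude the same two null events (non‑unique argmax, and a relation $cZ_a=c'Z_b$ forced by a column covering rows attaining their maxima at different $Z$'s), deduce that the equivalence classes of~\eqref{eq:equivalence} are exactly the blocks $\{i:j^*(i)=\ell\}$ with each $J\topp s$ nonempty and the $\wb J\topp s$ disjoint, and then do the same set‑covering bookkeeping and product‑of‑sums factorization. Your explicit remark that Assumption~A forces every column of $H$ to cover some row, so that $\lb p\rb=\bigsqcup_s\wb J\topp s$, is a point the paper uses only implicitly in~\eqref{eq:plug1}--\eqref{eq:plug2}, and is worth keeping.
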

The proof of Theorem~\ref{thm:factorization} is given in Section~\ref{sec:proofs}.
\begin{Rem}
Note that this result does not claim that 
$\nu(\vv x, E)$ in \eqref{eq:nus} is the {\em regular conditional probability}. It merely provides an equivalent
expression for \eqref{eq:nu}, which is valid with probability one.  We still need to show that \eqref{eq:nu}, or equivalently~\eqref{eq:nus}, is indeed
the regular conditional probability.
\end{Rem} 
From~\eqref{eq:nuws} and~\eqref{eq:nujs}, one can see that $\nu\topp s$ is the conditional distribution of $(Z_j)_{j\in\wb J\topp s}$. Therefore, Relation~\eqref{eq:nus} implies that
$\{(Z_j)_{j\in\wb J\topp s}\}_{s\in \lb r\rb }$, as vectors indexed by $s$, are {\em conditionally independent},
given $\vv X = \vv x$. This leads to the following improved conditional sampling algorithm:\medskip

\noindent {\bf Algorithm II:}
\begin{enumerate}
\item\label{step:hatz2} Compute $\widehat z_j$ for $j = 1,\dots, p$ and the hitting matrix $H = {\mathbb H}(A,\vv x)$.
\item\label{step:decompose2} Identify $\{J\topp s,\wb J\topp s\}_{s\in\lb r\rb }$ by~\eqref{eq:J^s} and~\eqref{eq:wbJ^s}.
\item\label{step:weights2} Compute $\{w_j\topp s\}_{j\in J\topp s}$ for all $s\in\lb r\rb $ by~\eqref{eq:nuws}.
\item\label{step:resample2} Sample $(Z_j)_{j\in\wb J\topp s}\mid \vv X = \vv x\sim\nu\topp s(\vv x,\cdot)$ independently for $s = 1,\dots,r$. 
\item\label{step:combine2} Combine the sampled $(Z_j)_{j\in\wb J\topp s}, s = 1,\dots, r$ to obtain a sample $\vv Z$.
\end{enumerate}

This algorithm identifies all hitting scenarios in an efficient way. To illustrate its efficiency compared to Algorithm I, consider that $r = 10$ and 
$|J\topp s| = 10$ for all $s\in\lb 10\rb $. Then, applying Formula~\eqref{eq:nu} in Algorithm I requires storing in memory 
the weights of all $10^{10}$ hitting scenarios. In contrast, the implementation of \eqref{eq:nus} requires saving
only $10\times10$ weights. This improvement is critical in practice since it allows us to handle large, realistic
models. 

Table \ref{table:speed} demonstrates the running times of Algorithm II as a function of the dimensions $n\times p$ of the 
matrix $A$. It is based on a discretized 2-d Smith model (Section \ref{sec:Smith}) and measured on
an {\tt Intel(R) Core(TM)2 Duo CPU E4400 \@2.00GHz} with {\tt 2GB RAM}. It is remarkable that the times scale linearly
in both $n$ and $p$.
  
\begin{table}
\begin{center}
\caption{\label{table:speed} Means and standard deviations (in parentheses) of the running times (in seconds) 
for the decomposition of the hitting matrix $H$, based on 100 independent observations 
$\vv X = A \odot \vv Z$, where $A$ is an $(n\times p)$ matrix corresponding to a discretized Smith model.}
\begin{tabular}{c|cccc}
\hline
 $p\setminus n$  & 1 & 5 & 10 & 50\\
\hline
2500 & 0.03 (0.02) & 0.13 (0.03) &0.24 (0.04) &1.25 (0.09)\\
10000 &0.11 (0.04) & 0.50 (0.05)  & 1.00 (0.08)& 4.98 (0.33)\\
\hline
\end{tabular}
\end{center}
\end{table}

%%%%%%%%%%%%%%%%%%%%%%%%%%%%%%%%%%%%%%%%%%%%%%%%%%%%%%%%%
%%%%%%%%%%%%%%%%%%%%%%%%%%%%%%%%%%%%%%%%%%%%%%%%%%%%%%%%%
%%%%%%%%%%%%%%%%%%%%%%%%%%%%%%%%%%%%%%%%%%%%%%%%%%%%%%%%%
%%%%%%%%%%%%%%%%%%%%%%%%%%%%%%%%%%%%%%%%%%%%%%%%%%%%%%%%%
%%%%%%%%%%%%%%%%%%%%%%%%%%%%%%%%%%%%%%%%%%%%%%%%%%%%%%%%%
%%%%%%%%%%%%%%%%%%%%%%%%%%%%%%%%%%%%%%%%%%%%%%%%%%%%%%%%%
%%%%%%%%%%%%%%%%%%%%%%%%%%%%%%%%%%%%%%%%%%%%%%%%%%%%%%%%%
%%%%%%%%%%%%%%%%%%%%%%%%%%%%%%%%%%%%%%%%%%%%%%%%%%%%%%%%%
\section{Examples} 

\subsection{MARMA processes}\label{sec:MARMA}
In this section, we apply our result to the max-autoregressive moving average (MARMA) processes studied by Davis and Resnick~\cite{davis89basic}.
A stationary process $\{X_t\}_{t\in\mathbb Z}$ is a MARMA$(m,q)$ process if it satisfies the MARMA recursion:
\equh\label{eq:MARMA}
X_t = \phi_1 X_{t-1}\vee\cdots\vee\phi_mX_{t-m}\vee Z_t\vee\theta_1Z_{t-1}\vee\cdots\vee\theta_qZ_{t-q}\,,
\eque
for all $t\in\mathbb Z$, where $\phi_i\geq 0,\theta_j\geq 0, i= 1,\dots,m, j=1,\dots,q$ are the parameters, and $\{Z_t\}_{t\in\mathbb Z}$ are i.i.d.~1-Fr\'echet random variables. 
Proposition 2.2 in~\cite{davis89basic} shows that,~\eqref{eq:MARMA} has a unique solution in form of
\equh\label{eq:causal}
X_t = \bigvee_{j=0}^\infty\psi_jZ_{t-j}<\infty\,,\mbox{almost surely,}
\eque
with $\psi_j\geq 0, j\geq 0, \sum_{j=0}^\infty \psi_j<\infty$, if and only if $\phi^* = \bigvee_{i=1}^m\phi_i<1$.
In this case, 
\[
\psi_j = \bigvee_{k=0}^{j\wedge q}\alpha_{j-k}\theta_k\,,
\]
where $\{\alpha_j\}_{j\in\mathbb Z}$ are determined recursively by $\alpha_j = 0$ for all $j<0$, $\alpha_0 = 1$ and 
\equh\label{eq:alphaj}
\alpha_j = \phi_1\alpha_{j-1}\vee\phi_2\alpha_{j-2}\vee\cdots\vee\phi_m\alpha_{j-m}\,,\forall j\geq 1\,.
\eque
In the sequel, we will focus on the MARMA process~\eqref{eq:MARMA} with unique stationary solution~\eqref{eq:causal}. In this case, the MARMA process is a spectrally discrete max--stable process. Without loss of generality, we also assume $\indz Z$ to be standard 1-Fr\'echet.

We consider the prediction of the MARMA process in the following framework: suppose at each time $t\in\{1,\dots,n\}$ we observe the value $X_t$ of the process, and the goal is to predict $\{X_s\}_{n<s\leq n+N}$. We do so by generating i.i.d.~samples from the conditional distribution $\{X_s\}_{n<s\leq n+N}\mid\{X_t\}_{t = 1,\dots,n}$. 
To apply our result, it suffices to provide a max-linear representation of this model. We will truncate~\eqref{eq:causal} to obtain
\equh\label{eq:wtX}
\wt X_t = \bigvee_{j=0}^p\psi_jZ_{t-j}\,,\forall t = 1,\dots,n+N\,.
\eque
The truncated process can approximate the original one arbitrarily well, if we take $p$ large enough. Indeed, by using the independence and max-stability of the $Z_t$'s, one can show that
\equh\label{eq:wtXt=Xt}
\proba(\wt X_t = X_t) 
= \proba\bpp{\bigvee_{j=0}^p\psi_jZ_{t-j}\geq \bigvee_{j=p+1}^\infty\psi_jZ_{t-j}} = 1-\frac{\sum_{j=p+1}^\infty\psi_j}{\sif j0\psi_j}\longrightarrow 1\,,
\eque
as $p\to\infty$. Moreover, by induction on $\alpha_j$ in~\eqref{eq:alphaj}, one can show that $\alpha_j\leq(\phi^*)^{\lceil j/m\rceil}$ for all $j\in\mathbb N$, and thus the convergence~\eqref{eq:wtXt=Xt} above is geometrically fast.

Now, we reformulate the prediction problem with the model~\eqref{eq:wtX} as follows:
\eqnh
\mbox{\em observe } {\vv X}_{[1,n]} = A\odot \vv Z,\qmand \mbox{\em predict }  \vv Y_{[1,N]} = B\odot \vv Z\mid\vv X_{[1,n]}\,,
\eqne
with the notation $\vv X_{[1,n]} = (\wt X_1,\dots,\wt X_n)$, $\vv Y_{[1,N]} = (\wt X_{n+1},\dots,\wt X_{n+N})$ and $\vv Z= (Z_{1-p},Z_{2-p},\dots,Z_{n+N})$. Here, $A\in\mathbb R_+^{n\times(p+n+N)}, B\in\mathbb R_+^{N\times(p+n+N)}$ are determined by~\eqref{eq:wtX}.  
In particular, 
\equh\label{eq:AB}
\left(
\begin{array}{c}
A\\
B
\end{array}
\right) = \left(
\begin{array}{cccccccc}
\psi_p&\psi_{p-1}& \cdots     & \psi_0 & 0        &0         &\cdots & 0\\
0  	  &\psi_p    & \psi_{p-1} & \cdots &\psi_0    &0         &\cdots &0\\
\vdots&    \ddots      & \ddots     & \ddots &          &\ddots    &\ddots &\vdots\\
0     &\cdots    & 0          & \psi_p &\psi_{p-1}&\cdots    &\psi_0 & 0\\
0     &\cdots    & 0          & 0      & \psi_p   &\psi_{p-1}&\cdots &\psi_0 
\end{array}
\right)\,.
\eque
In practice, given the observations $\vv X_{[1,n]}$, we use our algorithm to sample from the conditional distribution $\vv Z\mid\vv X_{[1,n]}$. Therefore, we can sample
\equh\label{eq:Y}
\vv Y_{[1,N]}\mid\vv X_{[1,n]} \eqd \vv B\odot\vv Z\mid \vv X_{[1,n]}\,.
\eque
%This procedure can be repeated multiple times, and we can thus obtain {\em optimal predictors} by taking sample means or medians.

Our approach is different from the prediction considered in~\cite{davis89basic}, which we will briefly review. Davis and Resnick took the classic time series point of view and investigated how to approximate $X_s$ by a max-linear combination of $\{X_t\}_{t = 1,\dots,n}$, w.r.t.~a certain metric $d$. Namely, for all $Y\in\calH$ with
\[
\calH = \bccbb{\bigvee_{j=-\infty}^\infty\alpha_jZ_j:\alpha_j\geq 0,\sum_{j=-\infty}^\infty\alpha_j<\infty}\,,
\]
they considered a projection of $Y$ onto the space $\filF_n$, max-linearly spanned by $\{X_t\}_{t=1,\dots,n}$: $\filF_n = \sccbb{\bigvee_{j=0}^\infty b_jX_{n-j}:b_j\geq 0, \sum_{j=0}^\infty b_j<\infty}$.
That is, consider the projection $\calP_nY$ defined by
\equh\label{eq:PnY}
\calP_nY = \argmin_{\wt Y\in\filF_n}d(\wt Y,Y)
\eque
with the metric $d$ induced by $d\spp{\bigvee_j\alpha_jZ_j,\bigvee_j\beta_jZ_j} = \sum_j|\alpha_j-\beta_j|$.
For specific MARMA processes,~\cite{davis89basic} provided predictors based on the projection~\eqref{eq:PnY}. We will refer to these predictors as the {\it projection predictors}. 

In general, the conditional samplings reflect the conditional distribution~\eqref{eq:Y}, and they provide more information than the
projection predictors. Sampling multiple times from~\eqref{eq:Y}, we can calculate e.g., conditional medians, conditional means, quantiles, etc., which 
are optimal predictors with respect to various loss functions.

\begin{Example}[MAR$(m)$ processes]
Consider the MAR$(m)\equiv$MARMA$(m,0)$ process with 
\equh\label{eq:directCS}
X_t = \phi_1X_{t-1}\vee\cdots\vee\phi_mX_{t-m}\vee Z_t\,.
\eque
The projection predictor for this model can be obtained recursively by
\equh\label{eq:DR}
\what X_{t+k} = \phi_1\what X_{t+k-1}\vee\cdots\vee \phi_m\what X_{t+k-m}\,,
\eque
with $\what X_t = X_t, t = 1,\dots, n$ (see \cite{davis89basic}, p.\ 799).
\end{Example}
Figure~\ref{fig:MARMA} illustrates an application of our conditional sampling algorithm in this case. Consider an MAR(3) process $\{X_t\}_{t = 1}^{150}$ with $\phi_1 = 0.7, \phi_2 = 0.5$ and $\phi_3 = 0.3$. In effect, we use the truncated model $\{\wt X_t\}_{t\in\mathbb N}$ in~\eqref{eq:wtX} with $p = 500$, but we still write $X_t$ for the sake of simplicity. Treating the first 100 values as observed,  we plot the projection predictor, conditional upper $95\%$-quantiles and the conditional medians of $\{X_s\}_{s = 101}^{150}$ based on 500 independent samples from the conditional distribution.

Observe that the value of the projection predictor in Figure~\ref{fig:MARMA} is always below the conditional median. This ``underestimation'' phenomenon was typical in all the simulations we performed. It can be explained by the fact that, the projection predictor in~\eqref{eq:DR} does not account for the jumps of the process caused by new arrivals $\{Z_t\}_{t>100}$. Indeed, a large new arrival $Z_t$ will cause the process to jump immediately to $Z_t$ at time $t$, but this will never occur for the projection predictor $\what X_t$. 
 \begin{figure} [!ht]
  \begin{center}
  \ifthenelse{\boolean{showFigure}}{
   \includegraphics[width = .9\textwidth]{MARMA.pdf}
   }{\includegraphics[width = .9\textwidth]{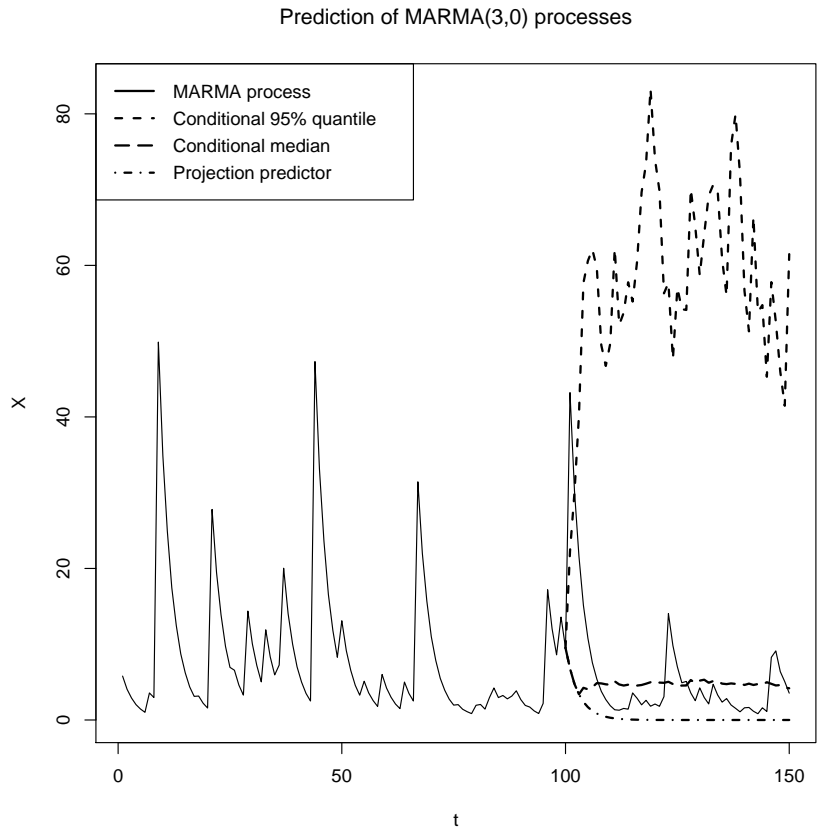}}
   {\caption{\label{fig:MARMA} Prediction of a MARMA(3,0) process with $\phi_1 = 0.7, \phi_2 = 0.5$ and $\phi_3 = 0.3$, based on the observation of
the first 100 values of the process.}}
  \end{center}
 \end{figure}

Next, we apply our algorithm to examine the bias of the projection predictor. To do this, for each generated MARMA process, we calculated the cumulative probability that the projection predictor corresponds to, for each location $s = 101,\dots,150$.
Namely, using 500 independent samples $\{X_s\topp k\}_{s = 101}^{150}, k = 1,\dots,500$ from the conditional distribution, we calculated
\equh\label{eq:QP}
\proba(X_s\leq \what X_s\mid\{X_t\}_{t=1}^{100}) \approx \frac1{500}\sum_{k=1}^{500} \ind{\{X_s\topp k\leq \what X_s\}}, \forall s>100\,,
\eque
where $\what X_s$ is the projection predictor in~\eqref{eq:DR}. This procedure was repeated 1000 times for independent realizations of $\{X_t\}_{t=1}^{100}$ and the means of the (estimated) probability in~\eqref{eq:QP} are reported in Table~\ref{tab:AQP}. Note that as the time lag increases, the conditional quantiles of the projection predictors decrease. In this way, our conditional sampling algorithm helps quantify numerically the observed underestimation phenomenon in Figure~\ref{fig:MARMA}.
\begin{table}[!ht]
\begin{center}
\caption{\label{tab:AQP} Cumulative probabilities that the projection predictors correspond to at time $100 + t$, based on 1000 simulations.}
\begin{tabular}{c|ccccccccccc}
 $t$  & 1 & 2 & 3 & 4 & 5 & 10 & 20 & 30 & 40 \\
\hline
mean & $70.6\%$ & $50.3\%$ & $35.6\%$ & $25.3\%$  & $17.8\%$ & $2.9\%$ & $0.1\%$ & $0\%$ & $0\%$ 
%0.706150 0.502672 0.356322 0.252842 0.178104
%2.209235e-01 2.902975e-01 3.001664e-01 2.824887e-01 2.537076e-01
\end{tabular}
\end{center}
\end{table}
%Table~\ref{tab:AQP} shows that the projection predictors in~\cite{davis89basic} should be interpreted with care. The fact that their precision decreases as the lag grows is natural since $X_s$ and $\{X_t\}_{t=1}^{100}$ become nearly independent for large $s$. In fact, one can show that $\what X_s = \calP_{100}(X_s)\to 0$ as $s\to\infty$ exponentially fast. The relative weak dependence of the MARMA time series also implies that the conditional distribution of $X_s\mid\{X_t\}_{t=1}^{100}$ becomes close to the uncondtional one for large $s$. Therefore, the optimal predictors will also be of limited value.

%Table~\ref{tab:AQP} also reflects the inapplicability of the projection predictor due to the underestimation, when $s$ is much larger than 100. For the model we chose, when $s>105$, the projection predictor can hardly be used. Our algorithm, on the other hand, has no such restriction and can always generate random vectors with desired conditional distributions, for all $s$, as long as the model is given. Note that when $s$ is large, the conditional distribution $X_s\mid\{X_t\}_{t=1}^{100}$ becomes very close to the unconditional one, because of the short-range dependence nature of the MARMA process that we consider here.

Finally, we compare the generated conditional samples to the true process values at times $s = 101,\dots,150$. Our goal is to demonstrate the validity of our conditional sampling algorithm. The idea is that, at each location $s = 101,\dots,150$, the true process should lie below the predicted $95\%$ upper
confidence bound of $X_s\mid \{X_t\}_{t=1}^{100}$, with probability at least $95\%$. (Note that due to the presence of atoms in the conditional distributions,
the coverage probability may in principle be higher than $95\%$.) Motivated by this, we repeat the procedure in the previous paragraph and record the proportion of the times that $X_s$ is below the predicted confidence quantile, for each $s$. We refer to these values as the {\it coverage rates}. As discussed, the coverage rates should be close to $95\%$. This is supported by our simulation result, shown in Table~\ref{tab:CR}. 
\begin{table}[!ht]
\begin{center}
\caption{\label{tab:CR} Coverage rates (CR) and the widths of the upper $95\%$ confidence intervals at time $100 + t$, based on 1000 simulations.
}

\begin{tabular}{c|ccccccccc}
 $t$  & 1 & 2 & 3 & 4 & 5 & 10 & 20 & 30 & 40\\% & 50\\
\hline
CR & 0.956 & 0.952 & 0.954 & 0.957 & 0.966 & 0.947 & 0.943 & 0.951 & 0.955\\
 width & 13.06 & 26.6 & 37.8 & 45.6 & 51.2 & 62.8 & 66.0 & 66.2 & 65.4
\end{tabular}
\end{center}
\end{table}

Table~\ref{tab:CR} also shows the widths of the upper $95\%$-confidence intervals. Note that these widths are not equal to the upper confidence 
bounds, given by the conditional $95\%$-quantiles, since the left end-point of the conditional distributions are greater than zero.
When the time lag is small, the left end-point is large and the widths are small, due to the strong influence of the past of the process
$\{X_t\}_{t=1}^{100}$. On the other hand, because of the weak temporal dependence of the MAR(3)
processes, this influence decreases fast as the lags increase. Consequently, the conditional distribution converges to the unconditional one, and the
conditional quantile to the unconditional one. Note that the (unconditional) $95\%$-quantile of $X_s$ for the MARMA process~\eqref{eq:causal} can
be calculated via the formula $0.95 = \proba(\sigma Z\leq u) = \exp(-\sigma u\inv)$, with $\sigma = \sum_{j=0}^{p}\psi_j$. For the MAR(3) process we
chose, we have $\sigma = 3.4$ and the $95\%$-quantile of $X_s$ equals $66.29$. This is consistent with the widths in Table \ref{tab:CR} for large lags.
\begin{Rem}
As pointed out by an anonymous referee, in this case one can directly generate samples from $\{X_s\}_{s=n+1}^N\mid\{X_t\}_{t=1}^n$, by generating independent Fr\'echet random variables and iterating~\eqref{eq:directCS}. We selected this example only for illustrative purpose and to be able to compare with the projection predictors in~\cite{davis89basic}. One can modify slightly the prediction problem, such that our algorithm still applies by adjusting accordingly~\eqref{eq:AB}, while both the projection predictor and the direct method by using~\eqref{eq:directCS} do not apply. For example, consider the prediction problem with respect to the conditional distribution $\proba(\{X_s\}_{s=2n+1}^{2n+N}\in\cdot\mid \{X_t:t = 1,3,\dots,2n-1\})$ (prediction with only partial history observed) or $\proba(\{X_s\}_{s=2}^{n-1}\in\cdot\mid X_1,X_n)$ (prediction of the middle path with the beginning and the end-point (in the future) given). In other words, our algorithm has no restriction on the locations of observations. This feature is of great importance in spatial prediction problems. 
\end{Rem}

%%%%%%%%%%%%%%%%%%%%%%%%%%%%%%%%%%%%%%%%%%%%%%%%%%%%%%%%%
%%%%%%%%%%%%%%%%%%%%%%%%%%%%%%%%%%%%%%%%%%%%%%%%%%%%%%%%%
%%%%%%%%%%%%%%%%%%%%%%%%%%%%%%%%%%%%%%%%%%%%%%%%%%%%%%%%%
%%%%%%%%%%%%%%%%%%%%%%%%%%%%%%%%%%%%%%%%%
\subsection{The Discrete Smith Model}\label{sec:Smith}

Consider the following {\em moving maxima} random field model in $\mathbb R^2$:
 \begin{equation}\label{e:2d-X}
   X_{\vv t} =  \Eint{\mathbb R^2} \phi(\vv t -\vv u) M_\alpha(\d \vv u ), \ \ \ \vv t = (t_1,t_2) \in \mathbb R^2,
   \end{equation}
 where $M_\alpha$ is an $\alpha$-Fr\'echet random sup-measure on $\mathbb R^2$ with the Lebesgue control 
 measure. Smith \cite{smith90max} proposed to use for $\phi$ the bivariate Gaussian density:
  \begin{equation}\label{e:phi}
  \phi(t_1,t_2) \defe \frac{\beta_1\beta_2}{2\pi\sqrt{1-\rho^2}}
  \exp\bccbb{ - \frac1{2(1-\rho^2)}\bb{\beta_1^2t_1^2 - 2\rho\beta_1\beta_2t_1t_2 + \beta_2^2t_2^2}},
 \end{equation}
with correlation $\rho\in (-1,1)$ and variances $\sigma_i^2 = 1/\beta_i^2,\ i=1,2$. Consistent and asymptotically 
normal estimators for the parameters $\rho,\ \beta_1$ and $\beta_2$ were obtained by de Haan and Pereira~\cite{dehaan06spatial}.
Here, we will assume that these parameters are known and will illustrate the conditional sampling methodology 
over a discretized version of the random field \eqref{e:2d-X}. 
Namely, we truncate the extremal integral in \eqref{e:2d-X} to the square region $[-M,M]^2$ and consider a uniform 
mesh of size $h\defe M/q,\ q\in \mathbb N$. We then set
\begin{equation}\label{e:2d-X-discrete}
 X_{\vv t} \defe \bigvee_{-q\le j_1,j_2\le q-1}h^{2/\alpha} \phi(\vv t - \vv u_{j_1 j_2}) Z_{j_1j_2},
\end{equation}
where $\vv u_{j_1j_2} = ((j_1+1/2)h,(j_2+1/2)h)$ and $h^{2/\alpha} Z_{j_1j_2} \stackrel{d}{=}
M_{\alpha}( (j_1h,(j_1+1)h]\times(j_2h,(j_2+1)h])$. This discretized model~\eqref{e:2d-X-discrete} can be made arbitrarily close to 
the spectrally continuous one in \eqref{e:2d-X} by taking a fine mesh $h$ and sufficiently large $M$ (see e.g.~\cite{stoev06extremal}).

Suppose that the random field $X$ in \eqref{e:2d-X-discrete} is observed at  $n$ locations 
$X_{\vv t_i} = x_i,\ \vv t_i \in [-M,M]^2,\ i = 1,\dots, n$. In view of \eqref{e:2d-X-discrete}, we have 
the max-linear model $\vv X = A \odot \vv Z$, with $\vv X = (X_{\vv t_i})_{i=1}^n$ and 
$\vv Z = (Z_{j})_{j=1}^p,\ p=q^2$. By sampling from the conditional distribution of $\vv Z \mid \vv X = \vv x$, we can predict the random field $X_{\vv s}$ at arbitrary locations $\vv s
\in{\mathbb R}^2$. 

To illustrate our algorithm, we used the model~\eqref{e:2d-X-discrete} with
parameter values 
$\rho = 0, \beta_1 = \beta_2 = 1, M = 4, p = q^2 = 2500$, and $n = 7$ observed locations. 
We generated $N=500$ independent samples from the conditional distribution of the random field $\{X_{\vv s}\}$, 
where $\vv s$ takes values on an uniform $100\times100$ grid, in the region $[-2,2]\times[-2,2]$. We have already seen four of these realizations in
Figure~\ref{fig:samplings}. 
Figure~\ref{fig:med_quant} illustrates the median and $0.95$-th quantile of the conditional distribution. The
former provides the optimal predictor for the values of the random field given the observed data,
with respect to the absolute deviation loss. The marginal quantiles, on the other hand, provide important confidence
regions for the random field, given the data. 

Certainly, conditional sampling may be used to address more complex
{\em functional} prediction problems.  In particular, given a two-dimensional threshold surface, one can readily obtain 
the {\em correct} probability that the random field exceeds or stays below this surface, conditionally on the
observed values. This is much more than what marginal conditional distributions can provide. 
 \begin{figure} [!ht]
  \begin{center}
  \ifthenelse{\boolean{showFigure}}{
   \includegraphics[width = .45\textwidth]{CS_Med.pdf}\includegraphics[width = .45\textwidth]{CS_MQ.pdf}
      }{\includegraphics[width = .45\textwidth]{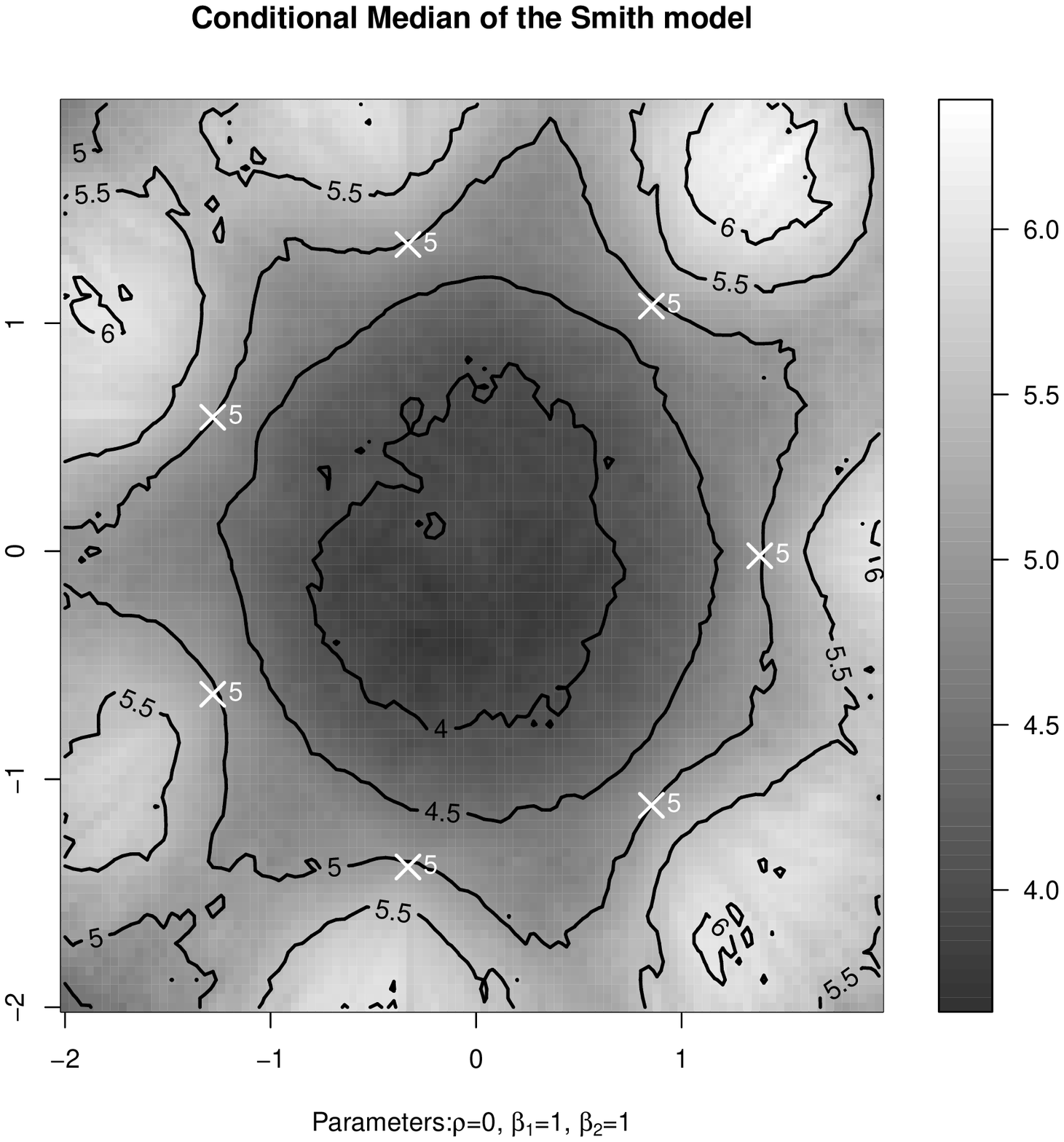}\includegraphics[width= .45\textwidth]{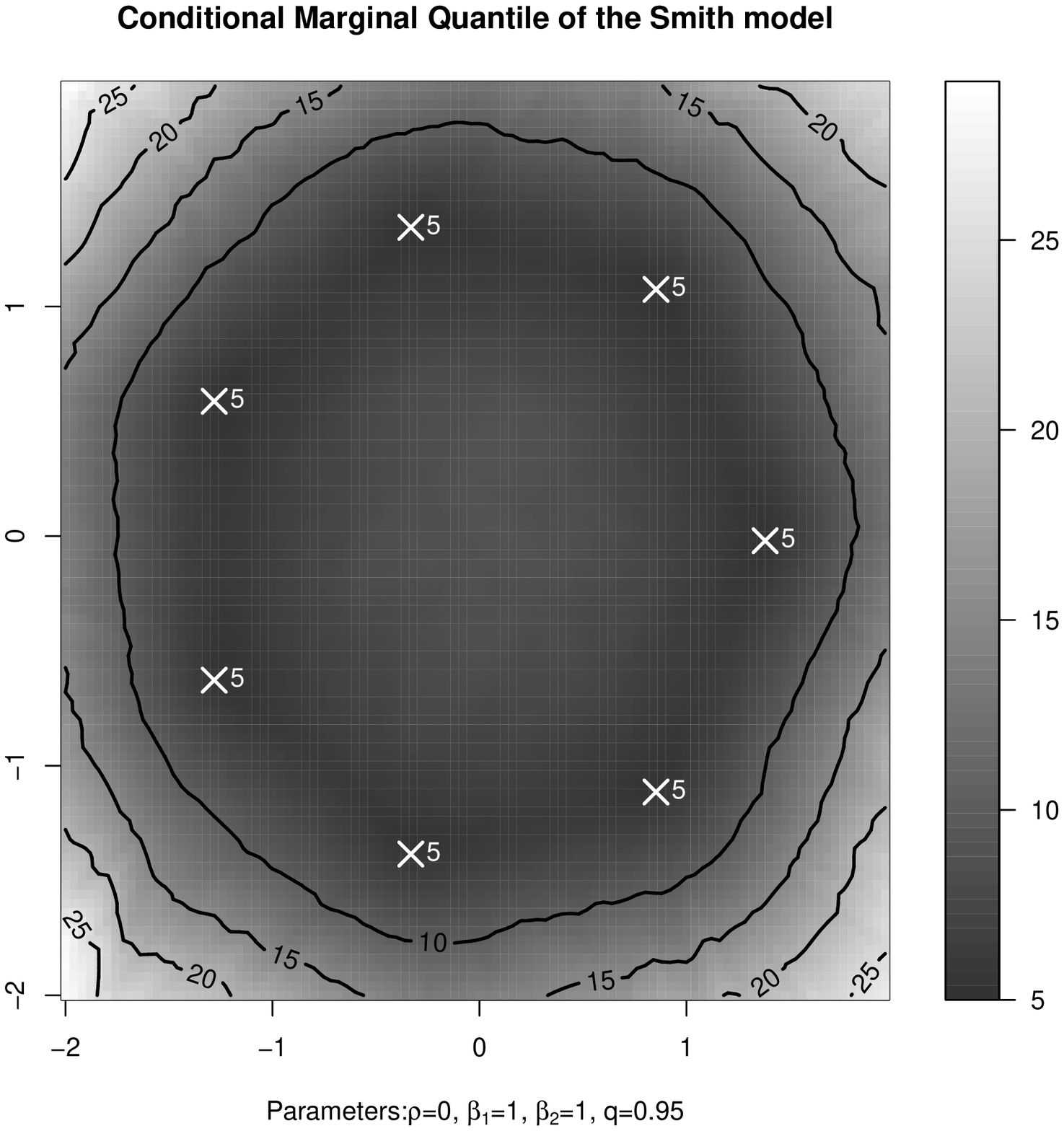}}
   {\caption{\label{fig:med_quant} Conditional medians ({\em left}) and $0.95$-th conditional marginal quantiles 
   ({\em right}). Each cross indicates an {\em observed} location of the random field, with the observed value at right.}}
  \end{center}
 \end{figure}

%%%%%%%%%%%%%%%%%%%%%%%%%%%%%%%%%%%%%%%%%%%%%%%%%%%%%%%%
%%%%%%%%%%%%%%%%%%%%%%%%%%%%%%%%%%%%%%%%%%%%%%%%%%%%%%%%
%%%%%%%%%%%%%%%%%%%%%%%%%%%%%%%%%%%%%%%%%%%%%%%%%%%%%%%%
%%%%%%%%%%%%%%%%%%%%%%%%%%%%%%%%%%%%%%%%%%%%%%%%%%%%%%%%
%%%%%%%%%%%%%%%%%%%%%%%%%%%%%%%%%%%%%%%%%%%%%%%%%%%%%%%%
%%%%%%%%%%%%%%%%%%%%%%%%%%%%%%%%%%%%%%%%%%%%%%%%%%%%%%%%
%%%%%%%%%%%%%%%%%%%%%%%%%%%%%%%%%%%%%%%%%%%%%%%%%%%%%%%%
%%%%%%%%%%%%%%%%%%%%%%%%%%%%%%%%%%%%%%%%%%%%%%%%%%%%%%%%
%%%%%%%%%%%%%%%%%%%%%%%%%%%%%%%%%%%%%%%%%%%%%%%%%%%%%%%%

\section{Proofs of Theorems~\ref{thm:rcp} and~\ref{thm:factorization}} \label{sec:proofs}
In this section, we prove Theorems~\ref{thm:rcp} and~\ref{thm:factorization}. 
We will first prove Theorem~\ref{thm:factorization}, which simplifies the regular conditional probability formula~\eqref{eq:nu} in Theorem~\ref{thm:rcp}. Then, we show the simplified new formula is the desired regular conditional probability, which completes the proof of Theorem \ref{thm:rcp}. 
The key step to prove Theorem~\ref{thm:factorization} is the following lemma.
Write $H_{\cdot j} = \{i\in\lb r\rb: h_{i,j} = 1\}$.
\begin{Lem}\label{lem:H}
Under the assumptions of Theorem~\ref{thm:factorization}, with probability one, 
\begin{itemize}
\item[(i)] $J\topp s$ is nonempty for all $s\in\lb r\rb $, and
\item[(ii)] for all $j\in J\topp s$, $H_{\cdot j}\cap I_s \neq \emptyset$ implies $H_{\cdot j}\subset I_s$.
\end{itemize}
\end{Lem}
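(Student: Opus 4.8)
The plan is to show that, on an event of probability one, the abstract equivalence classes $I_s$ of the relation~\eqref{eq:equivalence} coincide with the fibres of the ``argmax map'' attached to the representation $\vv X=A\odot\vv Z$, after which both assertions of the lemma reduce to bookkeeping. Throughout write $x_i=X_i(\omega)$, $z_j=Z_j(\omega)$, and let $j(i)$ denote an index attaining $x_i=\max_j a_{i,j}z_j$. First I would pass to a convenient almost-sure event. Since every $Z_j$ has a density, a.s.\ all $Z_j>0$, hence (Assumption~A) all $X_i>0$; moreover, for fixed $i$ and $j_1\ne j_2$ the event $\{a_{i,j_1}Z_{j_1}=a_{i,j_2}Z_{j_2}\}$ has probability zero, as does $\{Z_{l_1}=cZ_{l_2}\}$ for fixed $l_1\ne l_2$ and fixed $c>0$. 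Intersecting the complements of these events over the finitely many triples $(i,j_1,j_2)$ and over the finitely many constants $c$ arising as ratios $a_{k_2,l_2}a_{k_1,j}/(a_{k_1,l_1}a_{k_2,j})$ of entries of $A$ (one for each admissible index tuple), I obtain a probability-one event $\Omega_0$ on which (a) each $X_i$ attains its defining maximum at a unique column $j(i)$, and (b) no ``accidental'' coincidence of the above two kinds holds.

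The second step is a structural computation on $\Omega_0$. From $\what z_j=\min_{i}x_i/a_{i,j}$ and $h_{i,j}=\ind_{\{a_{i,j}\what z_j=x_i\}}$ one reads off that $h_{i,j}=1$ exactly when $a_{i,j}>0$ and $i$ attains $\min_{k:a_{k,j}>0}x_k/a_{k,j}$, i.e.\ $H_{\cdot j}=\argmin_{k}x_k/a_{k,j}$. Since $x_k\ge a_{k,j}z_j$ whenever $a_{k,j}>0$, we have $x_k/a_{k,j}\ge z_j$, with equality iff $j$ attains the maximum for row $k$, i.e.\ (by uniqueness) iff $j=j(k)$. Hence if $j$ is the argmax column of some row then $\what z_j=z_j$ and $H_{\cdot j}=\{k:j(k)=j\}$; and if $j$ is the argmax of no row, then $\what z_j>z_j$, and if $k_1\ne k_2$ both lie in $H_{\cdot j}$ then $a_{k_1,j(k_1)}z_{j(k_1)}/a_{k_1,j}=a_{k_2,j(k_2)}z_{j(k_2)}/a_{k_2,j}$, which on $\Omega_0$ forces $j(k_1)=j(k_2)$. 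In every case the conclusion is: on $\Omega_0$, whenever $H_{\cdot j}\ne\emptyset$, all rows in $H_{\cdot j}$ have one and the same argmax column.

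Now I would identify the equivalence classes and conclude. Any two rows of $H_{\cdot j}$ satisfy $i_1\stackrel{j}{\sim} i_2$, so each $H_{\cdot j}$ lies inside a single class; combined with the previous step, each elementary step $\stackrel{j}{\sim}$ links only rows sharing an argmax column, so that column is constant along every $\sim$-chain and $i_1\sim i_2\Rightarrow j(i_1)=j(i_2)$. Conversely $h_{i,j(i)}=1$ for every $i$, so all rows with a common argmax column are mutually related. Therefore, on $\Omega_0$, the classes are precisely the nonempty fibres $\{k:j(k)=l\}$; their number is the $r$ appearing in the partition $\lb n\rb=\bigcup_{s=1}^r I_s$, and each $I_s$ carries a well-defined common argmax column $l_s$, for which $H_{\cdot l_s}=\{k:j(k)=l_s\}=I_s$. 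Thus $l_s\in J\topp s$, which gives (i). For (ii), if $j\in J\topp s$ then $h_{i,j}=1$ for all $i\in I_s$, so $I_s\subseteq H_{\cdot j}$ and in particular $H_{\cdot j}\cap I_s\ne\emptyset$; since $H_{\cdot j}$ lies in a single class that contains the nonempty class $I_s$, that class must be $I_s$, hence $H_{\cdot j}\subseteq I_s$.

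The only genuinely delicate point is the ``no accidental coincidence'' part of the first step — showing that the continuous perturbations carried by the $Z_j$'s almost surely destroy every equality that could place two rows with \emph{different} argmax columns into a common column of $H$. Once that is in hand, the passage from the transitive closure~\eqref{eq:equivalence} to the explicit fibre decomposition is routine, and part (ii) is essentially formal, using only that the support of any column of $H$ lies within one equivalence class.
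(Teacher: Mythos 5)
Your proof is correct and follows essentially the same route as the paper's: you exclude the same two null events (ties in the row-wise argmax, and coincidences $Z_{l_1}=cZ_{l_2}$ for the finitely many constants $c$ built from ratios of entries of $A$), deduce that $i_1\stackrel{j}{\sim}i_2$ forces a common argmax column, and then read off (i) from $h_{i,j(i)}=1$ and (ii) from the fact that each $H_{\cdot j}$ sits inside a single equivalence class. Your additional identification of the classes $I_s$ with the argmax fibres (and $H_{\cdot l_s}=I_s$) is a mild sharpening of the paper's argument rather than a different method; the only nitpick is that the tie event $\{a_{i,j_1}Z_{j_1}=a_{i,j_2}Z_{j_2}\}$ should be intersected with attainment of the (strictly positive) maximum, as in the paper, to rule out the degenerate case $a_{i,j_1}=a_{i,j_2}=0$.
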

\begin{proof}
Note that to show part (ii) of Lemma~\ref{lem:H}, it suffices to observe that since $I_s$ is an equivalence class w.r.t.\ Relation~\eqref{eq:equivalence}, $H_{\cdot j}\setminus I_s$ and $H_{\cdot j}\cap I_s$ cannot be both nonempty. Thus, it remains to show part (i). We proceed by excluding several $\proba$-measure zero sets, on which the desired results may not hold.

First, observe that for all $i\in\lb n\rb $, the maximum value of $\{a_{i,j}Z_j\}_{j\in\lb r\rb }$ is achieved for unique $j\in\lb p\rb $ with probability one, since the $Z_j$'s are independent and have continuous distributions. Thus, the set 
\[
\calN_1 \defe\bigcup_{{i\in\lb n\rb ,j_1,j_2\in\lb p\rb, j_1\neq j_2 }}\bccbb{a_{i,j_1}Z_{j_1} = a_{i,j_2}Z_{j_2} = \max_{j\in\lb p\rb }a_{i,j}Z_j}
\]
has $\proba$-measure zero. From now on, we focus on the event $\calN_1^c$ and set $j(i) = \argmax_{j\in\lb p\rb }a_{i,j}Z_j$ for all $i\in\lb n\rb $. 

Next, we show that with probability one, $i_1\stackrel j\sim i_2$ implies $j(i_1) = j(i_2)$. That is, the set
\[
\calN_2 \defe\bigcup_{{j\in \lb p\rb ,i_1,i_2\in\lb n\rb, i_1\neq i_2 }}\calN_{j,i_1,i_2}\mwith \calN_{j,i_1,i_2}\defe \bccbb{j(i_1)\neq j(i_2), i_1\stackrel j\sim i_2}
\]
has $\proba$-measure 0. It suffices to show $\proba(\calN_{j,i_1,i_2}) = 0$ for all $i_1\neq i_2$. If not, since $\lb p\rb $ and $\lb n\rb $ are finite sets, there exists $\calN_0\subset \calN_{j,i_1,j_2}$, such that $j(i_1) = j_1\neq j(i_2) = j_2$ on $\calN_0$, and $\proba(\calN_0)>0$. At the same time, however, observe that $i_1\stackrel j\sim i_2$ implies $h_{i_1,j} = h_{i_2,j} = 1$, which yields
\[
a_{i_k,j}\what z_j = x_{i_k} = a_{i_k,j(i_k)}Z_{j(i_k)} = a_{i_k,j_k}Z_{j_k}\,, k = 1,2\,.
\]
It then follows that on $\calN_0$, $Z_{j_1}/Z_{j_2} = a_{i_1,j}a_{i_2,j_2}/(a_{i_2,j}a_{i_1,j_1})$, which is a constant. This constant is strictly positive and finite. Indeed, this is because on $\calN_1^c$, $a_{i,j(i)}>0$ by Assumption A and $h_{i,j} = 1$ implies $a_{i,j}>0$. Since $Z_{j_1}$ and $Z_{j_2}$ are independent continuous random variables, it then follows that $\proba(\calN_0) = 0$.

Finally, we focus on the event $(\calN_1\cup \calN_2)^c$. Then, for any $i_1,i_2\in I_s$, we have $i_1\sim i_2$ and let $\wt i_0,\dots,\wt i_n$ be as in~\eqref{eq:equivalence}. It then follows that $j(i_1) = j(\wt i_0) = j(\wt i_1) = \cdots =j(\wt i_n) = j(i_2)$. Note that for all $i\in \lb n\rb $, $h_{i,j(i)} = 1$ by the definition of $j(i)$. Hence, $ j(i_1) = j(i_2)\in J\topp s$. We have thus completed the proof.
\end{proof}
\begin{proof}[Proof of Theorem~\ref{thm:factorization}]
Since $\{I_s\}_{s\in \lb r\rb }$ are disjoint with $\bigcup_{s\in\lb r\rb }I_s =\lb n\rb $, in the language of the set-covering problem, to cover $\lb n\rb $, we need to cover each $I_s$. By part (ii) of Lemma~\ref{lem:H}, any two different $I_{s_1}$ and $I_{s_2}$ cannot be covered by a single set $H_{\cdot j}$. Thus we need at least $r$ sets to cover $\lb n\rb $. On the other hand, with probability one we can select one $j_s$ from each $J\topp s$ (by part (i) of Lemma~\ref{lem:H}), which yields a valid cover. That is, with probability one, $r = r(\calJ(H))$ and any valid minimum-cost cover of $\lb n\rb $ must be as in~\eqref{eq:Jr}, and vice versa. We have thus proved parts (i) and (ii).

To show (iii), by straight-forward calculation, we have, with probability one, 
\eqnhn
\sum_{J\in\calJ_r(A,\vv x)}w_J & = & \sum_{j_1\in J\topp 1}\cdots\sum_{j_r\in J\topp r}w_{j_1,\dots,j_r}\nonumber\\
& = & \sum_{j_1\in J\topp 1}\cdots\sum_{j_{r-1}\in J\topp {r-1}}\Bigg[\prod_{s=1}^{r-1}\what z_{j_s}f_{Z_{j_s}}(\what z_{j_s})\prod_{\substack{j\notin \wb J\topp r\\ j\neq j_1,\dots,j_{r-1}}}F_{Z_j}(\what z_j)\nonumber\\
& & \quad\quad\quad\times \bccbb{\sum_{j\in J\topp r}\bpp{\what z_jf_{Z_{j}}(\what z_{j})\prod_{k\in \wb J\topp r\setminus\{j\}}F_{Z_k}(\what z_k)}}\Bigg]\nonumber\\
& = & \prod_{s=1}^r{\sum_{j\in J\topp s}\bpp{\what z_jf_{Z_j}(\what z_j)\prod_{k\in \wb J\topp s\setminus\{j\}}F_{Z_k}(\what z_k)}} = \prod_{s=1}^r{\sum_{j\in J\topp s}w_j\topp s}\,.\label{eq:plug1}
\eqnen
Similarly, we have
\equh\label{eq:plug2}
\sum_{J\in\calJ_r(A,\vv x)} w_J\nu_J(\vv x,E) = \prod_{s=1}^r\bpp{\sum_{j\in J\topp s}w_j\topp s\nu_j\topp s(\vv x,E)}\,.
\eque
By plugging~\eqref{eq:plug1} and~\eqref{eq:plug2} into~\eqref{eq:nu}, we obtain the desired result and complete the proof.
\end{proof}

%
%
%  PROOF OF THEOREM 1
%

\begin{proof}[Proof of Theorem~\ref{thm:rcp}] To prove that $\nu$ in \eqref{eq:nu} yields the regular conditional probability of
$\vv Z$ given $\vv X$, it is enough to show that
\begin{equation}\label{e:P-X,E}
 \proba(\vv X\in D, \vv Z\in E) = \int_{D} \nu(\vv x,E)\proba^{\vv X}(\d \vv x),
\end{equation}
for all rectangles $D\in {\cal R}_{\mathbb R^n_+}$ and $E \in {\cal R}_{\mathbb R_+^p}$. In view of Theorem \ref{thm:factorization},
it is enough to work with $\nu(\vv x, E)$ given by \eqref{eq:nus}.

We shall prove \eqref{e:P-X,E} by breaking the integration into a suitable sum of integrals over regions corresponding to all hitting 
matrices $H$ for the max-linear model $\vv X = A\odot \vv Z$. We say such a hitting matrix $H$ is {\em nice}, if $J\topp s$ defined 
in~\eqref{eq:J^s} is nonempty for all $s\in\lb r\rb $.  In view of Lemma \ref{lem:H}, it suffices to focus on 
the set $\calH(A)$ of {\em nice} hitting matrices $H$. Notice that the set $\calH(A)$ is finite since the elements of the hitting 
matrices are $0$'s and $1$'s.

For all rectangles $D\in\calR_{\mathbb R_+^n}$, let 
\[
D_H = \bccbb{\vv x = A\odot \vv z: \mathbb H(A,\vv x) = H, \vv x\in D}\,
\]
be the set of all $\vv x \in \mathbb R_+^n$ that give rise to the hitting matrix $H$.
By Lemma~\ref{lem:H} {(i)}, for the random vector $\vv X = A \odot \vv Z$, with probability one,  we have
$$
 \vv X = \sum_{H\in \calH(A)} \vv X \ind_{D_H}(\vv X)
$$
and hence 
\equh\label{eq:sumH}
 \int_D\nu(\vv x,E)\proba^{\vv X}(\d \vv x) = \sum_{H\in\calH(A)}\int_{D_H}\nu(\vv x,E)\proba^{\vv X}(\d\vv x)\,.
\eque

Now {fix} an arbitrary and {\em non-random} nice hitting matrix $H\in\calH(A)$. Let $\{I_s\}_{s\in \lb r\rb }$ denote
the partition of $\lb n\rb$ determined by~\eqref{eq:equivalence} and let $J\topp s,\ \wb J\topp s,\ s = 1,\dots, r$ be as in~\eqref{eq:J^s}. 
Recall that $J^{(s)} \subset \wb J^{(s)}$ and the sets $\wb J^{(s)},\ s = 1,\dots, r$ are disjoint.

\medskip
{Focus on the set $D_H \subset \mathbb R_+^n$}. Without loss of generality, and for notational convenience, 
suppose that $s\in I_s$, for all $s = 1,\dots, r$. That is, 
$$
 I_1=\{1,i_{1,2},\dots,i_{1,k_1}\},\ I_2=\{2,i_{2,2},\dots,i_{2,k_2}\},\cdots,\  I_r=\{r,i_{r,2},\dots,i_{r,k_r}\}.
$$

Define the projection mapping $\calP_H: D_H \to \mathbb R_+^r$ onto the first $r$ coordinates:
\[
 \calP_H(x_1,\dots,x_n) = (x_{1},\dots,x_{r}) \equiv \vv x_{\vv r}\,.
\]
Note that $\calP_H$, {\em restricted to} $D_H$ is one-to-one. Indeed, for all $i\in I_s,$ we have $x_i = a_{i,j}\what z_j$ and
$x_{s} = a_{ s,j}\what z_j$, for all $j\in J\topp s$ (recall \eqref{eq:J^s}). This implies $ x_i = (a_{i,j}/a_{s,j}) x_s,$ for all
$i\in I_s$ and all $s = 1,\dots,r$. Hence, $\calP_H(\wtilde {\vv x}) = \calP_H(\vv x)$ implies $\wtilde {\vv x}= \vv x$.

Consequently, can write $\vv x = \calP^{-1}(\vv x_{\vv r}),\ \vv x_{\vv r}\in \calP(D_H)$, and
\[
\int_{D_H}\nu(\vv x,E)\proba^{\vv X}(\d \vv x)
  = \int_{\calP_H(D_H)}\nu( \vv x ,E) \Q_H^{\vv X_{\vv r}}(\d x_{1}\dots\d x_{r}),
\]
where $\Q_H^{\vv X_{\vv r}} \defe \proba^{\vv X}\circ\calP_H\inv$ is the induced measure on the set $\calP_H(D_H)$.

\begin{Lem}\label{lem:density} The measure $\Q_H^{\vv X_{\vv r}}$ has a density with respect to the Lebesgue measure 
on the set $\calP_H(D_H)$. The density is given by
\begin{equation}\label{e:lem:density}
\Q^{\vv X_{\vv r}}_H(\d \vv x_{\vv r}) = 
 \ind_{\calP_H(D_H)}(\vv x_{\vv r})\prod_{s=1}^r \sum_{j\in J^{(s)}} w_j\topp s (\vv x) \frac{ \d x_1}{x_1} \cdots \frac{\d x_r }{x_r}.
\end{equation}
\end{Lem}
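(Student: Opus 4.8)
The plan is to identify the finite measure $\Q_H^{\vv X_{\vv r}}=\proba^{\vv X}\circ\calP_H^{-1}$ by checking it on rectangles: I will show that for every rectangle $R\in\calR_{\mathbb R_+^r}$,
\[
\proba\bpp{\vv X\in D_H,\ \vv X_{\vv r}\in R}=\int_{R\cap\calP_H(D_H)}\prod_{s=1}^r\bpp{\sum_{j\in J\topp s}w_j\topp s(\vv x)}\,\frac{\d x_1}{x_1}\cdots\frac{\d x_r}{x_r}\,,
\]
where, on $\calP_H(D_H)$, each coordinate $x_i$ (and hence each $\what z_j=\what z_j(\vv x)$) is read off from $\vv x_{\vv r}$ via the relations $x_i=(a_{i,j}/a_{s,j})x_s$ for $i\in I_s$, $j\in J\topp s$, established just before the lemma; by the extension of measure theorem this yields \eqref{e:lem:density}.

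First I would split $\{\vv X\in D_H\}$ according to the ``active'' columns. On $(\calN_1\cup\calN_2)^c$, with the notation from the proof of Lemma~\ref{lem:H}, $j(i)=\argmax_{j\in\lb p\rb}a_{i,j}Z_j$ is well defined, is constant on each equivalence class $I_s$, and its common value $k_s$ lies in the nonempty set $J\topp s$. Thus, up to a $\proba$-null set, $\{\vv X\in D_H\}=\bigsqcup_{\vv k}E_H^{\vv k}$, where $\vv k=(k_1,\dots,k_r)$ ranges over $J\topp 1\times\cdots\times J\topp r$ and $E_H^{\vv k}$ is the event that $\vv X\in D_H$ and $j(i)=k_s$ for all $i\in I_s$. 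On $E_H^{\vv k}$ one has $X_i=a_{i,k_s}Z_{k_s}$ for $i\in I_s$ (with $a_{s,k_s}>0$, since $s\in I_s$ and $h_{s,k_s}=1$), so $\vv X$ — hence $\what z_j(\vv X)$ for every $j$ — is a deterministic function of $(Z_{k_1},\dots,Z_{k_r})$, and $Z_{k_s}=\what z_{k_s}(\vv X)$.

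Next I would factor the constraints defining $E_H^{\vv k}$. Writing $K=\{k_1,\dots,k_r\}$, the claim is that once $(z_{k_s})_s$ is fixed so that its extension $\vv x$ satisfies $\mathbb H(A,\vv x)=H$ and $\vv x\in D$, the remaining coordinates $z_j$, $j\notin K$, may vary freely and independently over $(0,\what z_j(\vv x))$ without changing the argmax pattern or the hitting matrix; this rests on $h_{i,k_s}=0$ for $i\notin I_s$, on $\bigsqcup_{s=1}^r\wb J\topp s=\lb p\rb$ (no column of $H$ is zero, by Assumption A), and on Lemma~\ref{lem:H}(ii). Integrating the product density $\prod_j f_{Z_j}$ of $\vv Z$ over $E_H^{\vv k}\cap\{\vv X_{\vv r}\in R\}$, the free coordinates contribute $\prod_{j\notin K}F_{Z_j}(\what z_j(\vv x))$, so
\[
\proba\bpp{E_H^{\vv k},\ \vv X_{\vv r}\in R}=\int_{\{(z_{k_s})_s:\ \vv x\in D_H,\ \vv x_{\vv r}\in R\}}\prod_{s=1}^r f_{Z_{k_s}}(z_{k_s})\prod_{j\notin K}F_{Z_j}(\what z_j(\vv x))\,\d z_{k_1}\cdots\d z_{k_r}\,.
\]
The substitution $x_s=a_{s,k_s}z_{k_s}$ (so $z_{k_s}=x_s/a_{s,k_s}=\what z_{k_s}(\vv x)$, $\d z_{k_s}=\d x_s/a_{s,k_s}$) carries the domain onto $R\cap\calP_H(D_H)$ and the integrand onto $\prod_{s=1}^r\bpp{f_{Z_{k_s}}(\what z_{k_s})/a_{s,k_s}}\,\prod_{j\notin K}F_{Z_j}(\what z_j)$.

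Finally I would reassemble. Since $x_s=a_{s,k_s}\what z_{k_s}$, we have $\what z_{k_s}\,\d x_s/x_s=\d x_s/a_{s,k_s}$, and since $\lb p\rb\setminus K=\bigsqcup_{s=1}^r(\wb J\topp s\setminus\{k_s\})$ we may regroup $\prod_{j\notin K}F_{Z_j}(\what z_j)=\prod_{s=1}^r\prod_{k\in\wb J\topp s\setminus\{k_s\}}F_{Z_k}(\what z_k)$; the integrand of $\proba(E_H^{\vv k},\vv X_{\vv r}\in R)$ thus equals $\prod_{s=1}^r w_{k_s}\topp s(\vv x)\,\frac{\d x_1}{x_1}\cdots\frac{\d x_r}{x_r}$. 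Summing over $\vv k\in J\topp 1\times\cdots\times J\topp r$ and applying the distributive identity $\sum_{\vv k}\prod_{s=1}^r w_{k_s}\topp s(\vv x)=\prod_{s=1}^r\sum_{j\in J\topp s}w_j\topp s(\vv x)$ gives the displayed formula. I expect the real work to be the middle step: verifying, after discarding the appropriate $\proba$-null sets, that the constraints defining $E_H^{\vv k}$ genuinely decouple into a part depending only on $(Z_{k_s})_s$ — which pins down $\vv x$, the matrix $H$ and all the bounds $\what z_j$ — and the independent one-sided constraints $0<Z_j<\what z_j(\vv x)$, $j\notin K$; this is precisely where Lemma~\ref{lem:H} and the combinatorial structure of $\{J\topp s,\wb J\topp s\}_{s\in\lb r\rb}$ are needed.
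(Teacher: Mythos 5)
Your proposal is correct and follows essentially the same route as the paper: both decompose the event according to the $r$-tuple of active columns $(k_1,\dots,k_r)\in J^{(1)}\times\cdots\times J^{(r)}$ (the paper phrases this via the indicator events $C_{s,j_s}$ and the representation of $\vv X_{\vv r}$ as a disjoint sum, you via the partition into events $E_H^{\vv k}$), exploit independence across the disjoint blocks $\wb J^{(s)}$, perform the change of variables $x_s=a_{s,k_s}z_{k_s}$, and regroup the sum of products into the product of sums. Your explicit attention to why the constraints on $z_j$, $j\notin K$, decouple into $0<z_j<\what z_j(\vv x)$ is exactly the content the paper packs into the events $C_{s,j_s}$ and Lemma~\ref{lem:H}.
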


\noindent 
The proof of this result is given below. In view of
\eqref{e:lem:density} and \eqref{eq:nus}, we obtain
\begin{eqnarray*}
& & \int_{\calP_H(D_H)}\nu(\vv x,E)\Q^{\vv X_{\vv r}}_H(\d \vv x_{\vv r}) \\
  & & = \int_{\calP_H(D_H)} \underbrace{
 \prod_{s=1}^r \bpp{\frac{ \sum_{j\in J^{(s)}} w_j^{(s)}(\vv x) \nu_j^{(s)} (\vv x, E)}{ \sum_{k\in J^{(s)}} w_k^{(s)}(\vv x) }}}_{ = \nu(\vv x,E)}  
\times \underbrace{\prod_{s=1}^r  \sum_{j\in J^{(s)}} w_j^{(s)}(\vv x) \frac{ \d x_1}{x_1} \cdots \frac{\d x_r }{x_r}}
_{ = \Q_H^{\vv X_{\vv r}} (\d \vv x_{\vv r}) }  \\
& & = \int_{\calP_H(D_H)} \prod_{s=1}^r \sum_{j\in J^{(s)}} w_j^{(s)}(\vv x) \nu_j^{(s)}(\vv x, E) \frac{ \d x_1}{x_1} \cdots \frac{\d x_r }{x_r}\,,
\end{eqnarray*}
which equals
\equh \label{eq:partition}
 \sum_{j_1\in J^{(1)}, \cdots, j_r\in J^{(r)}}\underbrace{\int_{\calP_H(D_H)} \prod_{s=1}^r  w_{j_s}^{(s)}(\vv x) \nu_{j_s}^{(s)}(\vv x, E) \frac{\d x_1}{x_1} \cdots \frac{\d x_r}{x_r}}_{=:I(j_1,\dots,j_r)}\,.
\eque
Fix $j_1\in J^{(1)},\cdots, j_r\in J^{(r)}$ and focus on the integral $I(j_1,\cdots,j_r)$. 
Define
$$
 \Omega_H^r(D_H) \defe
{\Big\{} (z_{j_1},\dots, z_{j_r})\, :\, z_{j_s} = x_s/a_{s,j_s},\ s = 1,\dots, r, \vv x_{\vv r} = (x_s)_{s=1}^r \in \calP_H(D_H) {\Big\}}.
$$
We have, by~\eqref{eq:nuws},~\eqref{eq:nujs}, and replacing $x_s$ with $a_{s,j_s} z_{j_s},\ s = 1,\dots, r$ (simple change of variables),
\begin{eqnarray}
 & & I(j_1,\cdots,j_r) \nonumber\\
 & & \ \ = \int_{\Omega_H^{r}(D_H)} \prod_{s=1}^r {\Big(}
  z_{j_s} f_{Z_{j_s}} (z_{j_s}) \prod_{k\in \wb J^{(s)}\setminus\{j_s\}} F_{Z_k}(\what z_k)\nonumber
\\
 & & \ \ \ \ \ \ \times \delta_{\pi_{j_s}(E)}(z_{j_s}) \prod_{k\in \wb{J}^{(s)}\setminus\{j_s\}} \proba(Z_k \in \pi_k(E) \mid Z_k < \what z_k) 
 {\Big)} \frac{\d z_{j_1}}{z_{j_1}} \cdots \frac{\d z_{j_r}}{z_{j_r}}\nonumber\\
& & \ \ =
 \mathop{\int}_{\Omega_H^{r}(D_H)} \prod_{s=1}^r f_{Z_{j_s}}(z_{j_s}) \delta_{\pi_{j_s}(E)}(z_{j_s}) \nonumber\\
 & & \ \ \ \ \ \ \times \prod_{k\in\lb p \rb \setminus \{j_1,\dots,j_r\}} \proba(Z_k\in \pi_k(E), Z_k <
 \what z_k) \d z_{j_1} \cdots \d z_{j_r}\,.\label{eq:Ij}
\end{eqnarray}
Define 
\begin{multline*}
\Omega_{H; j_1,\dots,j_r}(D_H) = {\Big\{}\vv z\in \mathbb R_+^p\, :\, \vv x = A\odot \vv z\in D_H\,,\\
z_{j_s} = x_s/a_{s,j_s},  s = 1,\dots, r, z_k < \what z_k(\vv x),\ k\in \lb p\rb \setminus\{j_1,\dots,j_r\} {\Big\}}.
\end{multline*}
By the independence of the $Z_k$'s,~\eqref{eq:Ij} becomes
\begin{equation}\label{eq:plug3}
I(j_1,\dots,j_r) = \proba {\Big(} \vv Z \in \Omega_{H; j_1,\dots,j_r}(D_H) \cap E {\Big)}\,.
\end{equation}
By plugging~\eqref{eq:plug3} into~\eqref{eq:partition}, we obtain 
\begin{multline}\label{eq:intDH}
 \int_{D_H} \nu(\vv x, E) \proba^{\vv X}(\d \vv x)  = \int_{\calP_H(D_H)}\nu(\vv x,E)\Q_H^{\vv X_{\vv r}}(\d \vv x_{\vv r}) \\
 = \sum_{j_1\in J^{(1)}, \cdots, j_r\in J^{(r)}}\proba( \vv Z \in \Omega_{H; j_1,\cdots,j_r}(D_H) \cap E)  = \proba(A\odot \vv Z\in D_H, \vv Z\in E),
\end{multline}
because the summation over $(j_1,\dots,j_r)$ accounts for all relevant hitting scenarios corresponding to the matrix $H$.
Plugging~\eqref{eq:intDH} into~\eqref{eq:sumH}, we have
\[
\int_D\nu(\vv x,E)\proba^{\vv X}(\d \vv x) 
 = \sum_{H\in\calH(A)}\proba( \vv X \equiv A\odot \vv Z\in D_H,\ \vv Z \in  E) = \proba(\vv X\in D, \vv Z\in E)\,.
\]
This completes the proof of Theorem~\ref{thm:rcp}.
\end{proof}

%
% PROOF OF LEMMA 2
%
\begin{proof}[Proof of Lemma \ref{lem:density}]
Consider the random vector $\vv X_{\vv r} = ( X_1,\dots,X_r)$. Observe that by the definition of the
set $\calP_H(D_H)$, on {\em the event } $\{ \vv X_{\vv r} \in \calP_H(D_H)\}$, we have
\begin{equation}\label{e:X_r-via-Z}
 \vv X_{\vv r} = \sum_{j_1 \in J^{(1)},\ \cdots,\ j_r \in J^{(r)}}
 \left (  \begin{array}{c} a_{1,j_1} Z_{j_1} \\ \vdots \\ a_{r,j_r} Z_{j_r} \end{array}\right) 
 \prod_{s=1}^r \underbrace{\ind{ {\Big\{} \bigvee_{k\in \wb J^{(s)} \setminus\{j_s\} } a_{s,k}Z_k < a_{s,j_s} Z_{j_s} {\Big\}}}}_{=:\ind{\{C_{s,j_s}\}}}.
\end{equation}
Note that since $J(s) \subset \wb J^{(s)},\ s = 1,\dots, r$, the events $\bigcap_{s=1}^r C_{s,j_s}$ are disjoint for all $r$-tuples
$(j_1,\dots,j_r)\in J^{(1)} \times \cdots \times J^{(r)}$.

Recall that our goal is to establish \eqref{e:lem:density}. By the fact that the sum in \eqref{e:X_r-via-Z} involves only one non-zero 
term for some $(j_1,\dots,j_r)$, with probability one, we have that for all measurable set  $\Delta \subset \calP_H(D_H)$, writing $\xi_{j_s} = a_{s,j_s} Z_{j_s}$,
\begin{multline}\label{e:Q-via-xi}
 \Q_H^{\vv X_{\vv r}} (\Delta) \equiv \proba( \vv X_{\vv r} \in \Delta ) \\
 = \sum_{j_1 \in J^{(1)},\ \cdots,\ j_r \in J^{(r)}}
 \proba\bpp{  \{ (\xi_{j_1},\cdots,\xi_{j_r}) \in \Delta\} \cap\bpp{ \bigcap_{s=1}^r C_{s,j_s}} }\,.
\end{multline}

Now, consider the last probability, for {fixed} $(j_1,\dots,j_r)$.  The random variables $\xi_{j_s},\ s = 1,\dots, r$
are independent and they have densities $f_{Z_{j_s}}(x_s/a_{s,j_s})/a_{s,j_s},\ x_s \in \mathbb R_+$.  We also have that the events
$C_{s,j_s},\ s = 1,\dots, r$ are mutually independent, since their definitions involve $Z_k$'s indexed by disjoint sets 
$\wb J^{(s)},\ s = 1,\dots, r$. By conditioning on the $\xi_{j_s}$'s, we obtain that the probability in the right-hand side 
of \eqref{e:Q-via-xi} equals
\begin{eqnarray*}
 & & \int_{\Delta} {\Big(}\prod_{s=1}^r \frac{1}{a_{s,j_s}} f(x_s/a_{s,j_s})  {\Big)} \times \prod_{s=1}^r 
 \proba {\Big(} \bigvee_{k\in\wb J^{(s)}\setminus\{j_s\} } a_{s,k}Z_k < x_s {\Big)}  \d x_1 \cdots \d x_r \\
 & & \ \ =\int_{\Delta}  \prod_{s=1}^r {\Big(} \frac{1}{a_{s,j_s}} f(x_s/ a_{s,j_s}) \prod_{k\in\wb J^{(s)}\setminus\{j_s\}} 
 F_{Z_k}(x_s/a_{s,k}) {\Big)}  \d x_1 \cdots \d x_r.
\end{eqnarray*}

In view of~\eqref{e:X_r-via-Z} and \eqref{eq:nuws}, replacing $\sum_{j_1\in J^{(1)},\ \cdots,\ j_r\in J^{(r)}} \prod_{s=1}^{r} \cdots$
by $\prod_{s=1}^{r} {(} \sum_{j \in J^{(s)}} \cdots {)}$, we obtain that the measure $\Q_H^{\vv X_{\vv r}}$ has a density on
$\calP(D_H)$, given by \eqref{e:lem:density}.\end{proof}
 
%%%%%%%%%%%%%%%%%%%%%%%%%%%%%%%%%%%%%%%%%%%%%%%%%%%%%%%%%
%%%%%%%%%%%%%%%%%%%%%%%%%%%%%%%%%%%%%%%%%%%%%%%%%%%%%%%%%
%%%%%%%%%%%%%%%%%%%%%%%%%%%%%%%%%%%%%%%%%%%%%%%%%%%%%%%%%
%%%%%%%%%%%%%%%%%%%%%%%%%%%%%%%%%%%%%%%%%%%%%%%%%%%%%%%%%
%%%%%%%%%%%%%%%%%%%%%%%%%%%%%%%%%%%%%%%%%%%%%%%%%%%%%%%%%
%%%%%%%%%%%%%%%%%%%%%%%%%%%%%%%%%%%%%%%%%%%%%%%%%%%%%%%%%
%%%%%%%%%%%%%%%%%%%%%%%%%%%%%%%%%%%%%%%%%%%%%%%%%%%%%%%%%
%%%%%%%%%%%%%%%%%%%%%%%%%%%%%%%%%%%%%%%%%%%%%%%%%%%%%%%%%

\appendix
\def\sm{{\sigma(m)}}
\def\wtsm{{\widetilde\sigma(m)}}
\def\pizlh{\calP_{l,H}\inv(\Lambda_H)}
\def\piwtzlh{\calP_{\wt l,H}\inv(\Lambda_H)}
\def\zlh{\Omega_l^H(\Lambda)}
\section{Regular conditional probability} \label{sec:rcp}

We recall here the notion of regular conditional probability. Let $\vv Z = (Z_1,\dots,Z_p)$, 
$\vv X = (X_1,\dots,X_n)$, and let $\calB_{\mathbb R^p_+}$ denote the Borel $\sigma$-algebra on $\mathbb R_+^p$.
The {\em regular conditional probability}  $\nu$ of $\vv Z$ given $\sigma(\vv X)$, is a function from 
$\calB_{\mathbb R^p_+}\times \mathbb R^n$ to $[0,1]$, such that 
\begin{itemize}
\item[(i)] $\nu(\vv x,\cdot)$ is a probability measure, for all $\vv x\in\mathbb R^n$,
\item[(ii)] The function $\nu(\cdot,E)$ is measurable, for all Borel sets $E\in {\cal B}_{\mathbb R^p}$.
\item[(iii)] $\P(\vv Z\in E,\ \vv X\in D) = \int_{D} \nu(\vv x,E) \proba^{\vv X}(\d \vv x)$, for all $E\in {\cal B}_{\mathbb R^p}$  and $D \in {\cal B}_{\mathbb R^n}$, where $\proba^{\vv X}(\cdot)\defe \P( \vv X\in \cdot)$.
\end{itemize}
See e.g.\ Proposition A 1.5.III in~\cite{daley08introduction} for more details. 

In Section~\ref{sec:conditionalProbability}, we provided an expression for the regular conditional 
probability in the max-linear model~\eqref{rep:maxLinear}: 
\equh\label{eq:nuEx}
\nu(\vv x,E) \defe \proba(\vv Z\in E\mid\vv X = \vv x),\ \  E\in\calB_{\mathbb R_+^p},\ \vv x\in\mathbb R^n_+\,.
\eque
The definition of $\nu$ implies that
$$
 \int_{\mathbb R^p} g(\vv z) \nu(\vv X,\d\vv z) = \esp (g(\vv Z) \mid\sigma(\vv X)),\ \ \proba^{\vv X}\mbox{-almost surely},
$$
for all Borel functions $g:\mathbb R^p \to \mathbb R$ with $\esp |g(\vv Z)| <\infty$.  By the strong law 
of large numbers, the latter conditional expectations are readily approximated by $N\inv \sum_{i=1}^N g(\vv Z^{(i)})$,
where $\vv Z^{(i)},\ i = 1,\dots, N$  are independent samples from the regular conditional probability 
$\nu(\vv X,\d \vv z)$.  Thus, $\nu$ is the {\em right} distribution to sample from when performing
prediction, given prior observed data. 

%\bibliographystyle{abbrv}
%\bibliography{maxstable,books,EVT,algorithms}

\begin{thebibliography}{10}

\bibitem{balkema77max}
A.~A. Balkema and S.~I. Resnick.
\newblock Max-infinite divisibility.
\newblock {\em J. Appl. Probability}, 14(2):309--319, 1977.

\bibitem{buishand08spatial}
T.~Buishand, L.~de~Haan, and C.~Zhou.
\newblock On spatial extremes: With application to a rainfall problem.
\newblock {\em Ann. Appl. Stat.}, 2(2):624--642, 2008.

\bibitem{caprara00algorithms}
A.~Caprara, P.~Toth, and M.~Fischetti.
\newblock Algorithms for the set covering problem.
\newblock {\em Ann. Oper. Res.}, 98:353--371 (2001), 2000.
\newblock Optimization theory and its application (Perth, 1998).

\bibitem{cooley07bayesian}
D.~Cooley, D.~Nychka, and P.~Naveau.
\newblock Bayesian spatial modeling of extreme precipitation return levels.
\newblock {\em J. Amer. Statist. Assoc.}, 102(479):824--840, 2007.

\bibitem{daley08introduction}
D.~J. Daley and D.~Vere-Jones.
\newblock {\em An introduction to the theory of point processes. {V}ol. {II}}.
\newblock Probability and its Applications (New York). Springer, New York,
  second edition, 2008.
\newblock General theory and structure.

\bibitem{davis89basic}
R.~A. Davis and S.~I. Resnick.
\newblock Basic properties and prediction of max-{ARMA} processes.
\newblock {\em Adv. in Appl. Probab.}, 21(4):781--803, 1989.

\bibitem{davis93prediction}
R.~A. Davis and S.~I. Resnick.
\newblock Prediction of stationary max-stable processes.
\newblock {\em Ann. Appl. Probab.}, 3(2):497--525, 1993.

\bibitem{davison90models}
A.~C. Davison and R.~L. Smith.
\newblock Models for exceedances over high thresholds.
\newblock {\em J. Roy. Statist. Soc. Ser. B}, 52(3):393--442, 1990.
\newblock With discussion and a reply by the authors.

\bibitem{dehaan78characterization}
L.~de~Haan.
\newblock A characterization of multidimensional extreme-value distributions.
\newblock {\em Sankhy\=a Ser. A}, 40(1):85--88, 1978.

\bibitem{dehaan84spectral}
L.~de~Haan.
\newblock A spectral representation for max-stable processes.
\newblock {\em Ann. Probab.}, 12(4):1194--1204, 1984.

\bibitem{dehaan06extreme}
L.~de~Haan and A.~Ferreira.
\newblock {\em Extreme value theory}.
\newblock Springer Series in Operations Research and Financial Engineering.
  Springer, New York, 2006.
\newblock An introduction.

\bibitem{dehaan06spatial}
L.~de~Haan and T.~T. Pereira.
\newblock Spatial extremes: Models for the stationary case.
\newblock {\em The Annals of Statistics}, 34:146--168, 2006.

\bibitem{dehaan86stationary}
L.~de~Haan and J.~Pickands, III.
\newblock Stationary min-stable stochastic processes.
\newblock {\em Probab. Theory Relat. Fields}, 72(4):477--492, 1986.

\bibitem{furrer09fields}
R.~Furrer, D.~Nychka, and S.~Sain.
\newblock {\em fields: Tools for spatial data}, 2009.
\newblock R package version 6.01.

\bibitem{gine90max}
E.~Gin{\'e}, M.~G. Hahn, and P.~Vatan.
\newblock Max-infinitely divisible and max-stable sample continuous processes.
\newblock {\em Probab. Theory Related Fields}, 87(2):139--165, 1990.

\bibitem{kabluchko09stationary}
Z.~Kabluchko, M.~Schlather, and L.~de~Haan.
\newblock Stationary max-stable fields associated to negative definite
  functions.
\newblock {\em Ann. Probab.}, 37(5):2042--2065, 2009.

\bibitem{naveau09modelling}
P.~Naveau, A.~Guillou, D.~Cooley, and J.~Diebolt.
\newblock Modelling pairwise dependence of maxima in space.
\newblock {\em Biometrika}, 96(1):1--17, 2009.

\bibitem{R09R}
{R Development Core Team}.
\newblock {\em R: A Language and Environment for Statistical Computing}.
\newblock R Foundation for Statistical Computing, Vienna, Austria, 2009.
\newblock {ISBN} 3-900051-07-0.

\bibitem{resnick87extreme}
S.~I. Resnick.
\newblock {\em Extreme values, regular variation, and point processes},
  volume~4 of {\em Applied Probability. A Series of the Applied Probability
  Trust}.
\newblock Springer-Verlag, New York, 1987.

\bibitem{resnick07heavy}
S.~I. Resnick.
\newblock {\em Heavy-tail phenomena}.
\newblock Springer Series in Operations Research and Financial Engineering.
  Springer, New York, 2007.
\newblock Probabilistic and statistical modeling.

\bibitem{resnick91random}
S.~I. Resnick and R.~Roy.
\newblock Random usc functions, max-stable processes and continuous choice.
\newblock {\em Ann. Appl. Probab.}, 1(2):267--292, 1991.

\bibitem{schlather02models}
M.~Schlather.
\newblock Models for stationary max--stable random fields.
\newblock {\em Extremes}, 5:33--44, 2002.

\bibitem{schlather03dependence}
M.~Schlather and J.~A. Tawn.
\newblock A dependence measure for multivariate and spatial extreme values:
  Properties and inference.
\newblock {\em Biometrika}, 90:139--156, 2003.

\bibitem{smith90max}
R.~L. Smith.
\newblock Max--stable processes and spatial extremes.
\newblock unpublished manuscript, 1990.

\bibitem{srivastava98course}
S.~M. Srivastava.
\newblock {\em A course on {B}orel sets}, volume 180 of {\em Graduate Texts in
  Mathematics}.
\newblock Springer-Verlag, New York, 1998.

\bibitem{stoev06extremal}
S.~A. Stoev and M.~S. Taqqu.
\newblock Extremal stochastic integrals: a parallel between max-stable and
  alpha-stable processes.
\newblock {\em Extremes}, 8(3):237--266, 2006.

\bibitem{wang10maxLinear}
Y.~Wang.
\newblock {\em maxLinear: Conditional sampling for max-linear models}, 2010.
\newblock R package version 1.0.

\bibitem{wang10structure}
Y.~Wang and S.~A. Stoev.
\newblock On the structure and representations of max--stable processes.
\newblock {\em Adv. in Appl. Probab.}, 42(3):855--877, 2010.

\end{thebibliography}

\end{document}